%
%
%
%
%
%
%
\documentclass[%
 reprint,
superscriptaddress,
 amsmath,amssymb,
 aps,
]{revtex4-2}

\usepackage{graphicx}
\usepackage{dcolumn}
\usepackage{bm}

\usepackage[utf8]{inputenc}
\usepackage[english]{babel}
\usepackage[T1]{fontenc}
\usepackage{amsmath}
\usepackage{amssymb}
\usepackage[colorlinks=true, linkcolor=blue, citecolor=blue, urlcolor=blue, pdfborder={0 0 0}]{hyperref}
\usepackage{amsmath,amssymb,amsthm,mathtools,epsfig,color,empheq,graphicx,graphics,algorithm,algorithmic,braket,dsfont,accents,multirow, float, balance, accents}
\newcommand{\ubar}[1]{\underaccent{\bar}{#1}}

\DeclareMathOperator{\diag}{dg}

\DeclareMathOperator{\trace}{Tr}

\DeclareMathOperator{\real}{Re}
\DeclareMathOperator{\imag}{Im}

\DeclareMathOperator{\Var}{Var}

\newtheorem{assumption}{Assumption}
\newtheorem{proposition}{Proposition}
\newtheorem{lemma}{Lemma}

\newtheorem{theorem}{Theorem}


\newcommand\cmb[1]{\textcolor{blue}{#1}}


\newcommand \bzero{\mathbf{0}}
\newcommand \bone{\mathbf{1}}

\newcommand \bb{\mathbf{b}}

\newcommand \be{\mathbf{e}}
\newcommand \bg{\mathbf{g}}

\newcommand \bi{\mathbf{i}}


\newcommand \bp{\mathbf{p}}

\newcommand \bv{\mathbf{v}}

\newcommand \bx{\mathbf{x}}

\newcommand \bz{\mathbf{z}}

\newcommand \bB{\mathbf{B}}

\newcommand \bG{\mathbf{G}}
\newcommand \bH{\mathbf{H}}
\newcommand \bI{\mathbf{I}}
\newcommand \bJ{\mathbf{J}}

\newcommand \bM{\mathbf{M}}
\newcommand \bN{\mathbf{N}}

\newcommand \bP{\mathbf{P}}

\newcommand \bR{\mathbf{R}}
\newcommand \bS{\mathbf{S}}
\newcommand \bT{\mathbf{T}}
\newcommand \bU{\mathbf{U}}
\newcommand \bV{\mathbf{V}}
\newcommand \bW{\mathbf{W}}
\newcommand \bX{\mathbf{X}}
\newcommand \bY{\mathbf{Y}}

\newcommand \bzeta{\boldsymbol{\zeta}}
\newcommand \btheta{\boldsymbol{\theta}}

\newcommand \blambda{\boldsymbol{\lambda}}

\newcommand \bxi{\boldsymbol{\xi}}

\newcommand \brho{\boldsymbol{\rho}}

\newcommand \bphi{\boldsymbol{\phi}}

\newcommand \bpsi{\boldsymbol{\psi}}

\newcommand \bLambda{\mathbf{\Lambda}}



\newcommand \htE{\hat{E}}
\newcommand \htF{\hat{F}}





\newcommand \mcC{\mathcal{C}}

\newcommand \mcE{\mathcal{E}}

\newcommand \mcG{\mathcal{G}}

\newcommand \mcL{\mathcal{L}}

\newcommand \mcN{\mathcal{N}}
\newcommand \mcO{\mathcal{O}}

\newcommand \mcV{\mathcal{V}}





\newcommand \cbM{\check{\mathbf{M}}}

\newcommand \cbLambda{\check{\mathbf{\Lambda}}}


\newcommand \hbg{\hat{\mathbf{g}}}
\newcommand \hbz{\hat{\mathbf{z}}}

\newcommand \hbM{\hat{\mathbf{M}}}

\newcommand \hbLambda{\hat{\mathbf{\Lambda}}}


\newcommand \bbz{\bar{\mathbf{z}}}



\begin{document}
\preprint{APS/123-QED}

\title{Solving Conic Programs over Sparse Graphs using a Variational Quantum Approach:\\
The Case of the Optimal Power Flow}

\author{Thinh Viet Le}
\email{le272@purdue.edu}
\affiliation{Elmore Family School of Electrical and Computer Engineering,
Purdue University, West Lafayette, Indiana 47906, USA}

\author{Mark M. Wilde}
\email{wilde@cornell.edu}
\affiliation{School of Electrical and Computer Engineering,
Cornell University, Ithaca, New York 14853, USA}

\author{Vassilis Kekatos}
\email{kekatos@purdue.edu}
\affiliation{Elmore Family School of Electrical and Computer Engineering,
Purdue University, West Lafayette, Indiana 47906, USA}

\date{\today}

\begin{abstract}
Conic programs arise broadly in physics, quantum information, machine learning, and engineering, many of which are defined over sparse graphs. Although such problems can be solved in polynomial time using classical interior-point solvers, the computational complexity scales unfavorably with graph size. In this context, this work proposes a variational quantum paradigm for solving conic programs, including quadratically constrained quadratic programs (QCQPs) and semidefinite programs (SDPs). We encode primal variables via the state of a parameterized quantum circuit (PQC), and dual variables via the probability mass function of a second PQC. The Lagrangian function can thus be expressed as scaled expectations of quantum observables. A primal-dual solution can be found by minimizing/maximizing the Lagrangian over the parameters of the first/second PQC. We pursue saddle points of the Lagrangian in a hybrid fashion. Gradients of the Lagrangian are estimated using the two PQCs, while PQC parameters are updated classically using a primal-dual method. We propose permuting the primal variables so that related observables are expressed in a banded form, enabling efficient measurement. The proposed framework is applied to the OPF problem, a large-scale optimization problem central to the operation of electric power systems. Numerical tests on the IEEE 57-node power system using Pennylane's simulator corroborate that the proposed doubly variational quantum framework can find high-quality OPF solutions. Although showcased for the OPF, this framework features a broader scope, including conic programs with numerous variables and constraints, problems defined over sparse graphs, and training quantum machine learning models to satisfy constraints.
\end{abstract}

\maketitle


\section{Introduction}
Conic programs defined over sparse graphs are ubiquitous in physics, quantum information, machine learning, and engineering. Many of these problems naturally take the form of semidefinite programs (SDPs) or nonconvex quadratically-constrained quadratic programs (QCQPs) that can be lifted to SDPs. In many of these formulations, the number of constraints scales with the graph size, and thus, although each constraint involves a small subset of variables, the Lagrangian function contains numerous terms. Representative examples include finding the ground-state energy of an Ising spin system~\cite{Barahona}, computing graph-theoretic quantities such as the Lov\'{a}sz theta number~\cite{lovasz}, verifying deep neural network properties~\cite{raghunathan}, and networked optimization in engineering, such as the optimal power flow problem~\cite{bai2008, lavaei2011}. Although classical algorithms for solving SDPs, such as interior-point methods~\cite{BoVa04}, are guaranteed to find a globally optimal solution, their computational cost may scale unfavorably with the problem dimension. 

Variational quantum algorithms (VQAs) have emerged as a versatile and promising paradigm for leveraging noisy intermediate-scale quantum (NISQ) processors to solve complex optimization problems. By employing parameterized quantum circuits (PQCs) optimized via classical routines, these hybrid quantum-classical algorithms are designed to comply with the constraints of current quantum hardware. Nonetheless, their application to constrained optimization remains underdeveloped. Existing methodologies are mostly limited to problems with binary variables and/or linear constraints, leaving a gap for the more general yet powerful class of conic programs, such as QCQPs and SDPs. Recently, Reference~\cite{patel2024} used two PQCs to encode both primal and dual variables (Lagrange multipliers) for solving SDPs with many constraints. Nevertheless, it assumes efficient measurement of observables and relies on a dual-decomposition method that entails a nested variational optimization. 



\subsection{Contributions}
This work proposes a novel VQA framework for solving large-scale conic programs with numerous constraints defined over sparsely connected graphs.
The contributions are on three fronts:

\emph{c1) Optimizing a doubly parameterized Lagrangian
function.} Building on the modeling proposed in~\cite{patel2024}, where primal variables are captured by the state of a primal PQC and the dual variables are captured through the probability mass function (PMF) associated with a dual PQC, we develop a VQA seeking a saddle point of the associated Lagrangian function. We aim to minimize the Lagrangian function with respect to the parameters of the primal PQC and maximize it with respect to the parameters of the dual PQC. This is accomplished in a hybrid fashion. First, the two PQCs are used to measure gradients of the Lagrangian with respect to the PQC parameters. A classical computer subsequently uses gradient information to iteratively update PQC parameters using a primal-dual method. In contrast to the dual decomposition~\cite{patel2024,ICASSP23}, which requires solving an inner variational optimization problem to optimality at each iteration, our method performs a single gradient descent step on the primal variables and a single gradient ascent step on the dual variables per iteration. The proposed primal-dual updates apply directly to a broader class of optimization problems that can be solved via VQAs. Moreover, the proposed method enables the training of quantum machine learning (QML) models that satisfy constraints. Such functionality is essential for endowing QML models with safety and stability features. 

\emph{c2) Efficient measurement of graph-induced observables.} Although the concerned observables are defined over sparse Hermitian matrices, standard LCU (linear combination of unitaries)-based decomposition methods do not yield qubit-efficient measurement protocols. To address this challenge, we utilize the graph structure of the problem instance and the extended Bell measurements (XBM)~\cite{Kondo2022}, to design a qubit-efficient protocol for measuring gradients of the Lagrangian function associated with the QCQP/SDP. Specifically, the XBM method partitions the entries of an $N\times N$ Hermitian measurement matrix into groups, allowing entries within each group to be measured simultaneously using at most $\mcO(\log N)$ additional quantum gates. If a matrix can be decomposed into $\tilde{\mcO}(\log N)$ such groups, the XBM method is qubit-efficient. Hence, we propose to permute the graph's nodes so that the related observables are banded. Finding the node permutation that yields the optimal number of groups is an NP-hard problem~\cite{feige2000}. Instead, we
adopt the reverse Cuthill-McKee (RCM) algorithm~\cite{rcm}, a greedy heuristic algorithm of linear complexity $\mcO(N)$. This is effective because banded matrices are known to feature a small number of XBM groups~\cite{Kondo2022}. The devised workflow can be instrumental in effectively measuring quantum observables associated with sparse connectivity graphs arising in applications over natural or engineered networked systems, including the representative examples discussed earlier.

\emph{c3) Application to the Optimal Power Flow (OPF) problem.} We illustrate the efficacy of the proposed framework on the OPF problem. OPF can be posed as an optimization problem over thousands of variables and constraints. If power flows need to be modeled precisely through the alternating-current (AC) power flow equations, the OPF can be formulated as a QCQP. Although this QCQP is not convex, the OPF can be relaxed to a convex SDP and solved to global optimality under certain conditions~\cite{bai2008,lavaei2011}. Nonetheless, the complexity of modern-day electric power grids raises scalability and optimality
challenges. In this context, VQA offers a promising alternative, provided that the required quantum measurements can be performed efficiently. Two key points for successfully adopting the XBM method to the OPF are: \emph{i)} The sparsity pattern (the positions of nonzero entries) of OPF observables is determined by the graph defined by the nodes and transmission lines of the power system, which is sparse; and \emph{ii)} According to numerical tests, nodes in a power system can be permuted via the RCM algorithm so that OPF observables can be partitioned in $\mcO((\log N)^3)$ groups. 

\subsection{Literature review}

\emph{Handling constrained optimization problems using VQAs.} 
A well-studied exemplar of VQAs is the variational quantum eigensolver (VQE), which aims to find the smallest eigenvalue of a quantum observable by employing a PQC~\cite{peruzzo2014}. The quantum approximate optimization algorithm (QAOA) is a variant of VQE with a problem-dependent PQC and a diagonal observable targeting combinatorial problems~\cite{farhi2014}. While VQE and QAOA were originally devised for unconstrained problems, extensions to optimization problems with constraints have been proposed in~\cite{Lucas, ICASSP23,VQEC, patel2024,westerheim2024,Hadfield19,bharti2021,VQA4QCQP1}. References~\cite{Lucas,westerheim2024} convert constrained problems to unconstrained ones by penalizing constraint violations on the objective function. However, selecting a suitable penalty parameter is a nontrivial task. Unreasonably large values can cause ill-conditioning, while unreasonably small values can yield infeasible solutions. In~\cite{Hadfield19}, the QAOA's ansatz is adapted to ensure that the output state stays within the feasible subspace; yet, this strategy applies only to binary problems with linear constraints. References~\cite{bharti2021} propose a hybrid method for solving SDPs. The idea is to approximate the matrix variable of the original SDP by a semidefinite matrix of smaller dimension. The resultant SDP of reduced dimension is solved classically using standard interior point-based methods. Quantum measurements are used to translate the cost and constraint functions of the original to the compressed SDP formulation. Nonetheless, the solution of the compressed SDP may be infeasible or suboptimal for the original SDP.

An alternative approach to cope with constrained problems is via Lagrangian duality. Reference~\cite{ICASSP23} seeks a saddle point of the Lagrangian function with respect to the PQC parameters and the dual variables. To tackle problems with numerous constraints, reference~\cite{patel2024} encodes dual variables through a parameterized quantum state, in addition to the primal variables.  Both~\cite{ICASSP23} and \cite{patel2024} solve the associated dual problem using the iterative method of dual decomposition. Nevertheless, each iteration of dual decomposition involves solving a VQE-type problem to optimality, which is computationally overwhelming. In~\cite{VQEC}, this drawback is circumvented by using the primal-dual update method, which only requires estimating gradients of the Lagrangian function. Nonetheless, reference~\cite{VQEC} applies only to binary and linear programs. In this work, we extend~\cite{VQEC} and \cite{patel2024} to deal with a doubly variational VQA formulation of QCQP/SDPs, which we showcase via the OPF problem.

\emph{Measuring quantum observables.} Measuring quantum observables is key to the effective implementation of VQAs. A prominent measurement method decomposes a Hermitian matrix into a linear combination of unitaries (LCU), and measures the resultant unitary observables via the destructive swap test~\cite{schuld2021,swap_test}. This method is practically relevant only if the resultant unitaries can be implemented efficiently using quantum gates. This is the case, for example, if the unitaries can be expressed as Pauli strings (tensor products of Pauli matrices), and the number of Pauli strings scales polynomially with the number of qubits, $n$; see~\cite{mcclean2016,crawford2021,nielsen00}. Unfortunately, a general matrix may involve $\mcO(4^n)$ Pauli strings~\cite{koska2024}. The number of Pauli strings grows exponentially with $n$, even for some structured sparse matrices, such as $k$-banded matrices that involve $\mcO(kn2^n)$ Pauli strings~\cite{koska2024}. One possible strategy to reduce the number of measurements is to group Pauli strings into sets of terms that pairwise commute. If two Hermitian matrices commute, they can be jointly diagonalized, and thus, measured simultaneously~\cite[pg.~43]{shankar2012}. To this end, various grouping methods have been proposed, including qubit-wise commutativity grouping~\cite{verteletskyi2020,yen2020}, general commutativity grouping~\cite{gokhale2020}, and unitary partitioning~\cite{izmaylov2019,zhao2020}. Nevertheless, partitioning Pauli strings to achieve the minimum number of groups is an NP-hard problem~\cite{gokhale2020}. 

Classical shadowing is an alternative technique for measuring observables~\cite{huang2020predicting}. It is particularly effective when the measurement matrix can be expressed as local Pauli strings, i.e., Pauli strings that act non-trivially on a small number of qubits. For observables that do not admit this form, however, the number of measurements scales linearly with the squared Frobenius norm of the matrix. Unfortunately, for connectivity matrices of sparse graphs, this quantity generally grows linearly with the network size.

Another recent method for measuring observables involves extended Bell measurements~\cite{Kondo2022}. This method decomposes the measurement matrix into matrices whose associated observables are simultaneously measurable using qubit-efficient rotation unitaries. Nevertheless, the number of induced observables may reach $2^n$ for a general sparse measurement matrix~\cite{Kondo2022}. Our key idea is to permute nodes of the underlying graph once and beforehand, so that the associated quantum observables can be measured efficiently. 

\emph{Quantum computing for the OPF.} A previous quantum computing effort to the OPF problem has focused on the classical primal-dual interior point method, wherein the Harrow–Hassidim–Lloyd (HHL) algorithm serves as a linear system solver in each Newton step~\cite{amani2024quantum}. The idea of utilizing the HHL algorithm as a linear system solver has also been explored for solving the nonlinear AC power flow equations, which constitute a subset of constraints of the AC OPF problem~\cite{Spyros1,feng2021quantum,liu2024quantum}. Nonetheless, the computational complexity of the HHL algorithm does not scale well for linear systems involving the Jacobian matrix of the AC power flow equations, as established in~\cite{demystifying}. Reference~\cite{hu2024} integrates a PQC as an intermediate layer of a classical neural network, trained in a supervised fashion to predict OPF solutions. However, it is not clear
which measurement outputs are encoded into the subsequent classical hidden layer, nor how the parameter-shift rule is applied under such a design.

\subsection{Notation}
Column vectors are denoted by boldface lower-case letters. Matrices are denoted by boldface upper-case letters. The symbol $^\top$ stands for transposition, and $^\dag$ for conjugate transposition. The operator $\diag(\bx)$ returns a diagonal matrix with vector $\bx$ on its main diagonal, while $\diag(\bX)$ returns a vector with the diagonal entries of matrix $\bX$. The set of real-valued symmetric matrices of dimension $M$ is denoted by $\mathbb{S}^M$. Logarithmic functions are of the binary base. The imaginary unit is denoted by $\iota\coloneqq \sqrt{-1}$. The notation $m=1:M$ means that index $m$ takes values from $1$ to $M$. Symbol $\left\|\bH\right\|$ denotes the spectral norm of a matrix $\bH$.

\section{Primal-dual iterations for QCQP}\label{sec:classicalpd}
We focus on optimization problems defined over sparse graphs, often arising from physical models as nonconvex QCQPs or SDPs. The latter may appear either as native formulations or as convex relaxations of the QCQPs. Prototypical examples of such problems include:
\begin{itemize}
\item \textbf{Ising ground-state energy:} Minimizing a quadratic Hamiltonian $H(\mathbf{s}) = \sum_{i<j} J_{ij}s_i s_j + \sum_i h_i s_i$ over spin configurations $\mathbf{s} \in \{\pm1\}^N$ can be expressed as a nonconvex QCQP. The coupling matrix $\bJ$ is typically sparse, reflecting local interactions in physical lattices or graphical models. The resulting QCQP admits an SDP relaxation obtained by lifting the quadratic $\pm1$ constraints $s_i^2 = 1$; see~\cite{Barahona}.

\item \textbf{Lov\'asz theta number:} 
For a graph $\mcG=(\mcV,\mcE)$ with $N=|\mcV|$ vertices, the Lov\'asz theta number can be computed as the optimal value of an SDP over an $N\times N$ matrix variable:
\begin{align}
    \vartheta(\mcG):=\max_{\bX \succeq 0} ~&~\bone^\top\bX\bone \nonumber\\
    \textrm{subject to (s.to)}~&~\trace(\bX)=1, \quad  \nonumber\\ 
    ~&~ \bX_{i,j} = 0 \ \forall (i,j)\in\mcE,\nonumber
\end{align}
where $\bone$ is the all-ones vector. Each edge $(i,j)\in\mcE$ yields one constraint, so the constraint matrices inherit the sparsity pattern of $\mcG$. While commonly presented as an SDP, the Lov\'asz theta number admits an equivalent QCQP formulation~\cite{lovasz}.



\item \textbf{Neural network verification:} Certifying the robustness of neural networks (NN) against adversarial perturbations entails bounding the worst‑case margin between classes over a permissible input set. In networks using the rectified linear unit (ReLU) as activation function, the input-output relationship of the ReLU $z = \max(0, x)$ admits an exact reformulation including a quadratic constraint, namely $z \ge 0$, $z \ge x$, and $z(z-x)=0$~\cite{raghunathan}. Based on this reformulation, the NN verification task can be cast as a nonconvex QCQP over the high-dimensional vector of neuron activations. The constraint matrices are highly sparse and structured, reflecting the layered connectivity of the underlying computational graph~\cite{raghunathan}.

\item \textbf{Optimal power flow:} The AC OPF problem minimizes the electricity generation cost to meet load demand while satisfying grid engineering constraints. The problem can be posed as a nonconvex QCQP over nodal voltages, where the quadratic constraints represent Kirchhoff's laws and operational limits. The sparsity pattern is governed by the power grid topology through transmission lines, as each grid node is directly connected to only a few other nodes~\cite{bai2008}.
\end{itemize}

To streamline the presentation, we focus on the QCQP formulation and discuss its extension to the SDP setting in Section~\ref{sec:parameterizedopf}. We consider the QCQP:
\begin{align}\tag{QCQP}
    P^* \coloneqq \min_{\bx\in \mathbb{C}^{N}}~&~\bx^\dag \bM_0 \bx\label{eq:qc}\\
    \textrm{s.to}~&~\bx^\dag \bM_m \bx \leq b_m, \quad \text{for}~~m=1:M \nonumber
\end{align}
where $\{\bM_m \in \mathbb{C}^{N\times N}\}_{m=0}^M$ are sparse Hermitian matrices determined by the underlying problem graph, and the number of constraints $M$ typically scales with the network size $N$.

Although \eqref{eq:qc} is generally nonconvex,  it can be relaxed to an SDP, which can be solved by standard interior-point solvers. In some application domains, this convex relaxation is successful in the sense that the optimal SDP variable is rank-one and a globally optimal $\bx$ can be readily recovered from the SDP solution. Nonetheless, the
memory and time requirements imposed by SDP solvers
may be challenging to meet as $N$ increases. This motivates us to explore a VQA for QCQP/SDPs. 

Toward developing a PQC-based solver for large-scale QCQP/SDPs, let us discuss how \eqref{eq:qc} could be solved on a classical computer. Among the different approaches, we present the primal-dual (PD) decomposition as a representative example of a first-order optimization algorithm. The PD method relies on Lagrangian duality~\cite{BoVa04}. Each constraint indexed by $m$ in~\eqref{eq:qc} is associated with a non-negative dual variable $\lambda_m$. Let vector $\blambda \in \mathbb{R}^{M}_+$ collect all dual variables. The related \emph{Lagrangian function} is a function of both $\bx$ and $\blambda$:
\begin{equation} \label{eq:Lagrangian0}
\mcL(\bx;\blambda)=\bx^\dag \bM_0 \bx+\sum_{m=1}^M \lambda_m (\bx^\dag \bM_m \bx-b_m).
\end{equation}
The corresponding \emph{dual problem} is posed as
\begin{equation}
\label{eq:qcdual}
D^*\coloneqq \max_{\blambda\succeq \bzero}\min_{\bx\in \mathbb{C}^N}~\mcL(\bx;\blambda).
\end{equation}
A \emph{primal/dual solution} $(\bx^*, \blambda^*) \in \mathbb{C}^N \times \mathbb{R}^M_+$ of \eqref{eq:qcdual} satisfies 
\[\mcL(\bx^*;\blambda) \leq \mcL(\bx^*;\blambda^*) \leq \mcL(\bx;\blambda^*)\quad~\]
for all $(\bx, \blambda)\in \mathbb{C}^N \times \mathbb{R}^M_+$. Such a pair is called a \emph{saddle point} of the Lagrangian function. In essence, $\bx^*$ is a minimizer of $\mcL(\bx;\blambda^*)$ and $\blambda^*$ is a maximizer of $\mcL(\bx^*;\blambda)$. For problems featuring strong duality (in which case $P^*=D^*$), a point is an \emph{optimal primal/dual point} if and only if it is a saddle point of the Lagrangian function~\cite[p.~239]{BoVa04}. Furthermore, at a primal/dual solution $(\bx^*;\blambda^*)$, the value of the Lagrangian satisfies $\mcL(\bx^*;\blambda^*)=P^*=D^*$. Finding an exact saddle point of the Lagrangian imposes stringent conditions on $\mcL(\bx;\blambda)$. However, approximate saddle points can be reached under specific settings using gradient-based methods, such as the PD method~\cite{nedic2009}. 

According to the PD method, during iteration $t$, the primal and dual variables are updated as:
\begin{subequations}\label{eq:pd1}
\begin{align}
\bx^{t+1}&\coloneqq \bx^{t}-\mu^{t}_x \nabla_{\bx}\mcL(\bx^t;\blambda^t)\label{eq:pd1:p}, \\
\blambda^{t+1}&\coloneqq \left[\blambda^{t}+\mu^{t}_{\lambda} \nabla_{\blambda}\mcL(\bv^{t+1};\blambda^t)\right]_+,\label{eq:pd1:d}
    \end{align}   
\end{subequations}
where $\mu_x^t$ and $\mu_{\lambda}^t$ are positive step sizes, and $[\lambda]_+ \coloneqq \max\{\lambda,0\}$. The update in \eqref{eq:pd1:p} is a gradient descent step on $\mcL$ in terms of the primal variable. The update in \eqref{eq:pd1:d} is a projected gradient ascent step on $\mcL$ in terms of the dual variable.

One may attempt to solve \eqref{eq:qc} using the PD iterations in \eqref{eq:pd1}. In this case, computing the Lagrangian gradients incurs complexity $\mcO(sN^2)$, assuming each row of $\bM_m$ has at most $s$ nonzero entries. To see this, note that the partial derivative of $\mcL$ with respect to $\lambda_m$ is
\[\frac{\partial \mcL}{\partial\lambda_m}=\bx^\dag \bM_m \bx-b_m,\]
which can be computed in $\mcO(sN)$ operations. Therefore, computing the entire gradient $\nabla_{\blambda}\mcL$ entails complexity $\mcO(sNM)=\mcO(sN^2)$ given that $M$ is proportional to $N$. Similarly, computing the gradient
\[\nabla_{\bx} \mcL=2\left(\bM_0+\sum_{m=1}^M \lambda_m \bM_m\right)\bx\]
incurs $\mcO(sN^2)$ operations. For large-scale problems, such as solving AC OPF for contemporary power systems with more than $N=50,000$~nodes, this computational cost quickly becomes prohibitive. In this context, can a VQA attain reduced computational complexity? The next section proposes a variational model for QCQP/SDPs. 

\section{A doubly variational QCQP/SDP}\label{sec:parameterizedopf}
Variational quantum computing is an algorithmic tool for solving high-dimensional problems in a parameterized form using a PQC. The well-known VQE solves a specific form of~\eqref{eq:qc}~\cite{peruzzo2014}:
\begin{align}
\min_{\bx\in \mathbb{C}^{N}}~&~\bx^\dag \bM_0 \bx\label{eq:vqe}\\
\textrm{s.to}~&~\bx^\dag\bx=1.\nonumber
\end{align}
The optimal cost in \eqref{eq:vqe} is the smallest eigenvalue of~$\bM_0$. Moreover, the eigenvector corresponding to the smallest eigenvalue of $\bM_0$ is the minimizer. The key idea of VQE is to model the original high-dimensional vector~$\bx$ as a quantum state $\ket{\bpsi}$ using $\log N$ qubits. Suppose for now that $N$ is a power of 2. The constraint in \eqref{eq:vqe} can be omitted as the quantum state has unit Euclidean norm anyway. This quantum state is generated upon applying a PQC on an initial state $\ket{\bzero}_{\log N}$. The PQC is modeled by a unitary matrix $\bU(\btheta)$ parameterized by vector $\btheta \in \mathbb{R}^{P}$ with $P \ll N$. The PQC generates the state $\ket{\bpsi(\btheta)}=\bU(\btheta)\ket{\bzero}$, and then measures the expectation $F_0(\btheta)=\braket{\bpsi(\btheta)|\bM_0|\bpsi(\btheta)}$ and its gradient $\nabla_{\btheta} F_0(\btheta)$. Subsequently, a classical computer receives the gradient information and updates $\btheta$ to minimize $F_0(\btheta)$ using gradient descent or other classical optimization methods. 

We propose solving the~\eqref{eq:qc} in a similar manner. Because the primal variable of the~\eqref{eq:qc} does not have a unit Euclidean norm, we parameterize it as~\cite{scaling_state}:
\begin{equation}\label{eq:modelv}
\bx(\btheta)=\alpha \ket{\bpsi(\btheta)}
\end{equation}
where $\alpha>0$ is an auxiliary optimization variable. The PQC used to generate $\ket{\bpsi(\btheta)}$ will be henceforth termed the \emph{primal PQC} or PQC$_p$. 

Given the parameterization in \eqref{eq:modelv}, the problem~\eqref{eq:qc} can be expressed in the variational form:
\begin{align}\label{eq:qcF}\tag{QCQP$_{\theta}$}
P_{\theta}^*\coloneqq \min_{\btheta,\alpha>0}~&~\alpha^2F_0(\btheta)\\ 
\mathrm{s.to}~&~\alpha^2F_m(\btheta)\leq b_m,\quad m=1:M\nonumber
\end{align}
where expectations are defined as
\begin{equation}\label{eq:observable1}
F_m(\btheta)\coloneqq \braket{\bpsi(\btheta)|\bM_m|\bpsi(\btheta)},\quad m=0:M.
\end{equation}
These expectations can be alternatively expressed via the density operator $\brho(\btheta)$ associated with the PQC$_p$ state as
\begin{equation}\label{eq;observable2}
F_m(\btheta)=\trace(\bM_m\brho(\btheta)),\quad m=0:M.
\end{equation}
If PQC$_p$ is characterized by a pure state $\ket{\bpsi(\btheta)}$, its density operator is $\brho(\btheta)=\ket{\bpsi(\btheta)} \!\bra{\bpsi(\btheta)}$. Generally, a density operator is a positive semidefinite (PSD) and unit-trace matrix~\cite{nielsen00}. Under this interpretation,~\eqref{eq:qcF} is a variational SDP over the density operator $\brho(\btheta)$.

As in VQE, the original \eqref{eq:qc} over $\bx$ has been replaced by the variational problem \eqref{eq:qcF} over the parameter vector $\btheta$. Attempting to solve~\eqref{eq:qcF} using Lagrangian duality entails solving the dual problem:
\begin{equation*}
D_{\theta}^*\coloneqq \max_{\blambda\geq \bzero}~\min_{\btheta,\alpha>0}\mcL(\btheta,\alpha;\blambda)
\end{equation*}
over $\blambda\in\mathbb{R}^M_+$. The related Lagrangian function is
\begin{align}\label{eq:Lagrangian1o}
\mcL(\btheta,\alpha;\blambda)&= \alpha^2F_0(\btheta)+\sum_{m=1}^M\lambda_m\left(\alpha^2F_m(\btheta)-b_m\right).
\end{align}
Unfortunately, because the dual variable $\blambda$ has length $\mcO(N)$, the curse of dimensionality remains.

To bypass this difficulty, dual variables can also be represented using a variational model. A similar idea was proposed in~\cite{patel2024} for capturing the primal and dual variables of an SDP. To model $\blambda$ variationally, we introduce a second PQC, henceforth termed the \emph{dual PQC} or PQC$_d$. The dual PQC operates on $\log M$ qubits, supposing again that the number of constraints, $M$, is a power of two. The dual PQC is modeled by a unitary matrix $\bV(\bphi)$ parameterized by vector $\bphi\in\mathbb{R}^Q$ such that $Q\ll M$. The state of PQC$_d$ is $\ket{\bxi(\bphi)}=\bV(\bphi)\ket{\bzero}$; see Fig.~\ref{fig:PD_scheme}. 

With the aid of PQC$_d$, the $m$-th entry of $\blambda$ is parameterized as
\begin{equation}\label{eq:modellambda}
\lambda_m(\bphi)=\beta^2\cdot|\xi_m(\bphi)|^2, \quad\text{for}~m=1:M
\end{equation}
where $\beta>0$ is an auxiliary variable and $\xi_m(\bphi)$ is the $m$-th entry of the state $\ket{\bxi(\bphi)}$. Like $\alpha$ in \eqref{eq:modelv}, variable $\beta$ is introduced for scaling purposes. Unlike \eqref{eq:modelv}, however, dual variables are related to the \emph{squared magnitudes} of quantum state entries. This ensures that $\lambda_m(\bphi)$ takes real nonnegative values. 


The Lagrangian function related to this doubly parameterized QCQP can be expanded into three terms:
\begin{equation}\label{eq:Lagrangian1}
\mcL(\btheta,\alpha;\bphi,\beta)= \alpha^2F_0(\btheta)+\alpha^2\beta^2F(\btheta,\bphi) -\beta^2 G(\bphi),
\end{equation}
where 
\begin{equation}\label{eq:LagrangianF}
F(\btheta,\bphi)\coloneqq \sum_{m=1}^M|\xi_m(\bphi)|^2F_m(\btheta)
\end{equation}
and 
\begin{equation}\label{eq:LagrangianG}
G(\bphi)\coloneqq \sum_{m=1}^M b_m|\xi_m(\bphi)|^2.
\end{equation}

The variational dual problem is defined as
\begin{equation}\label{eq:dualThetaPhi}
D_{\theta,\phi}^*\coloneqq \max_{\bphi,\beta>0}\min_{\btheta,\alpha>0} \mcL(\btheta,\alpha;\bphi,\beta).
\end{equation}

One may wonder why we do not parameterize the primal and dual variables jointly using a single PQC operating on $\log N+\log M$ qubits. Unfortunately, under that design, we would not be able to update the PQC parameters in a meaningful way. Having two PQCs parameterized separately, as depicted in Figure~\ref{fig:PD_scheme}, allows us to minimize the Lagrangian function in \eqref{eq:Lagrangian1} over $(\btheta,\alpha)$ and maximize it over $(\bphi,\beta)$. 

The standard Lagrangian dual function is known to be concave with respect to the dual variables, even if the primal problem is nonconvex. That is not the case for the dual function $\min_{\btheta,\alpha>0} \mcL(\btheta,\alpha;\bphi,\beta)$ in \eqref{eq:dualThetaPhi}. This is because the dual variables are now parameterized, and $\mcL(\btheta,\alpha;\bphi,\beta)$ is not concave in $(\bphi,\beta)$. 

\section{Solving the doubly variational QCQP/SDP}\label{sec:solveparameterizedopf}
\begin{figure}[t]
\centering
\includegraphics[width=1\linewidth]{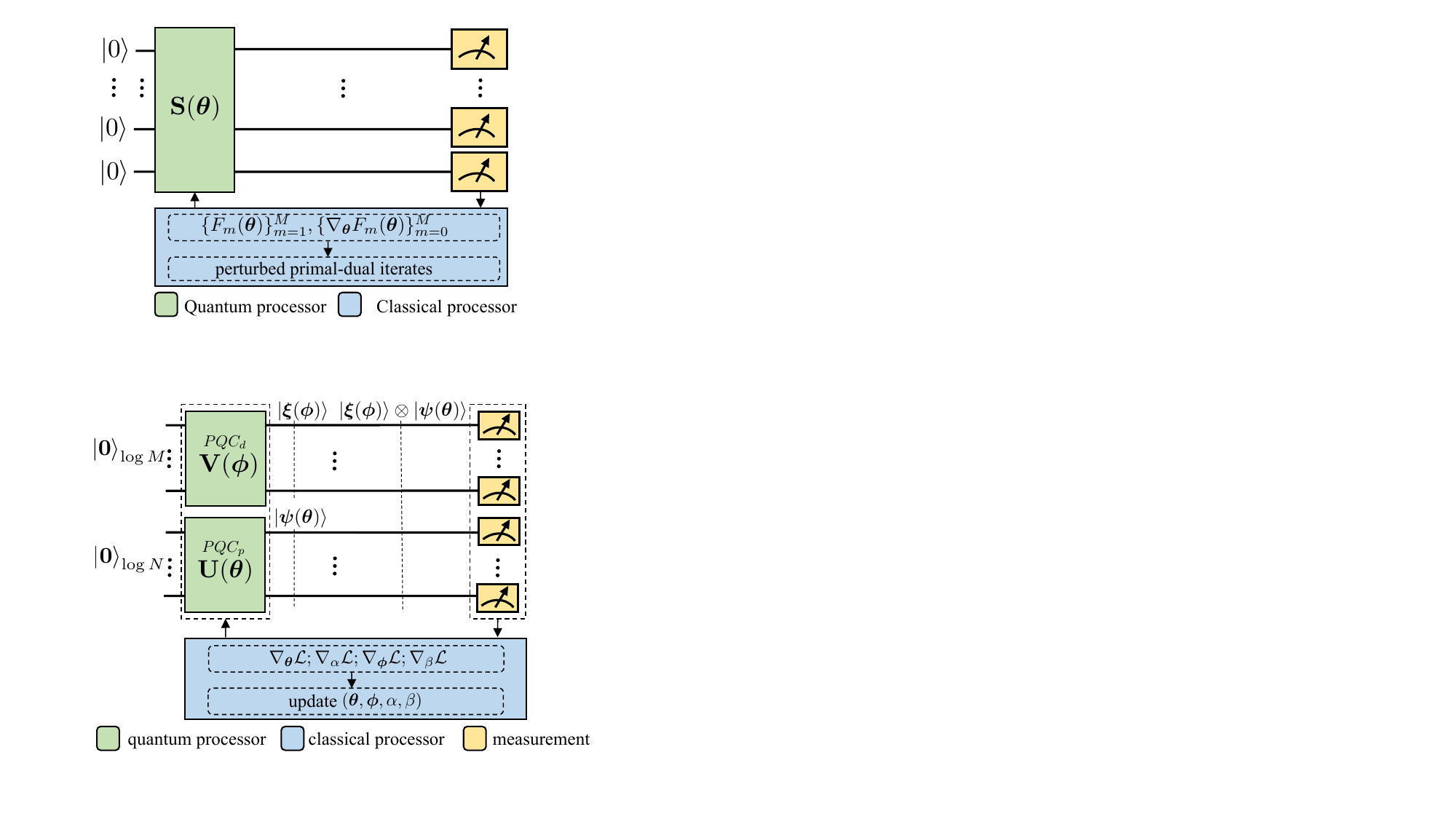}
\caption{Workflow of approximately solving~\eqref{eq:qc} using a doubly variational approach. The primal PQC (bottom) parameterizes the primal variables. The dual PQC (top) parameterizes the dual variables.}
\label{fig:PD_scheme}
\end{figure}

We propose seeking a saddle point of the Lagrangian function in \eqref{eq:Lagrangian1} using a PD algorithm. The $t$-th iteration of the algorithm comprises four steps:
\begin{subequations}\label{eq:pdmv}
    \begin{align}
    \btheta^{t+1}&\coloneqq \btheta^{t}-\mu^{t}_{\theta} \nabla_{\btheta}\mcL(\btheta^t,\alpha^t;\bphi^t,\beta^t)\label{eq:pdm:theta},\\
    \alpha^{t+1}&\coloneqq \left[\alpha^{t}-\mu^{t}_{\alpha} \nabla_{\alpha}\mcL(\btheta^t,\alpha^t;\bphi^t,\beta^t)\right]_+\label{eq:pdm:alpha},\\
    \bphi^{t+1}&\coloneqq \bphi^{t}+\mu^{t}_{\phi} \nabla_{\bphi}\mcL(\btheta^t,\alpha^t;\bphi^t,\beta^t)\label{eq:pdm:phi},\\
    \beta^{t+1}&\coloneqq \left[\beta^{t}+\mu^{t}_{\beta} \nabla_{\beta}\mcL(\btheta^t,\alpha^t;\bphi^t,\beta^t)\right]_+\label{eq:pdm:beta},
    \end{align}   
\end{subequations}
where $\mu^{t}_{\theta}$, $\mu^{t}_{\alpha}$, $\mu^{t}_{\phi}$, and $\mu^{t}_{\beta}$ are positive step sizes. Steps~\eqref{eq:pdm:theta}--\eqref{eq:pdm:alpha} constitute gradient descent steps to update the primal variables, while~\eqref{eq:pdm:phi}--\eqref{eq:pdm:beta} are gradient ascent steps to update the dual variables. The updates of \eqref{eq:pdmv} run on a classical computer. The suggested workflow is shown in Fig.~\ref{fig:PD_scheme}.  

The gradients in \eqref{eq:pdmv} can be estimated with the help of the two PQCs. The key observation is that each term of the Lagrangian function can be expressed as an expectation over an observable. More specifically, the first term can be computed using an observable operating on PQC$_p$ through the expectation:
\begin{equation}\label{eq:term1}
F_0(\btheta)=\braket{\bpsi(\btheta)|\bM_0|\bpsi(\btheta)}.
\end{equation}
The third term can be computed using an observable operating on PQC$_d$ through the expectation:
\begin{equation}\label{eq:term3}
G(\bphi)=\braket{\bxi(\bphi)|\bS|\bxi(\bphi)},
\end{equation}
where $\bS\coloneqq \diag(\{b_m\}_{m=1}^M)$ is a diagonal matrix with the OPF parameters $b_m$ on its main diagonal. Due to the parameterization in \eqref{eq:modellambda}, the Lagrangian of the doubly parameterized problem is quadratic in $\ket{\bxi(\bphi)}$, although the Lagrangian in \eqref{eq:Lagrangian1o} was linear in $\blambda$. 

The second term of the Lagrangian can be expressed as the expectation over an observable operating \emph{jointly} on PQC$_p$ and PQC$_d$, as shown in the ensuing lemma proved in Appendix~\ref{sec:app:le1}.

\begin{lemma}\label{le:term2}
The second summand in \eqref{eq:LagrangianF} can be computed as the expectation
\begin{equation}\label{eq:term2}
F(\btheta,\bphi)= \braket{\bxi(\bphi),\bpsi(\btheta)|\bM|\bxi(\bphi),\bpsi(\btheta)}
\end{equation}
defined by the $MN\times MN$ Hermitian matrix
\begin{equation}\label{eq:M}
\bM=\sum_{m=1}^M \be_m \be_m^\top \otimes \bM_m
\end{equation}
and $\be_m$ is the $m$-th column of $\bI_M$. 
\end{lemma}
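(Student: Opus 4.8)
The plan is to evaluate the right-hand side of \eqref{eq:term2} directly, exploiting the tensor-product structure of both the joint state $\ket{\bxi(\bphi),\bpsi(\btheta)}=\ket{\bxi(\bphi)}\otimes\ket{\bpsi(\btheta)}$ and the observable $\bM$ in \eqref{eq:M}, and showing that it collapses term by term to the sum in \eqref{eq:LagrangianF}. First I would substitute the definition of $\bM$ into the expectation and use linearity together with the mixed-product property of the Kronecker product. Suppressing the arguments $\bphi,\btheta$ for brevity, this yields
\begin{equation*}
\braket{\bxi,\bpsi|\bM|\bxi,\bpsi}=\sum_{m=1}^M \braket{\bxi|\be_m\be_m^\top|\bxi}\cdot\braket{\bpsi|\bM_m|\bpsi},
\end{equation*}
the factorization being legitimate because $\be_m\be_m^\top$ acts only on the $M$-dimensional (dual) factor and $\bM_m$ only on the $N$-dimensional (primal) factor.

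The second step is to identify each scalar factor. Since $\be_m$ is a real column of $\bI_M$, one has $\be_m^\top=\be_m^\dag$, so that $\braket{\be_m|\bxi}=\xi_m(\bphi)$ and $\braket{\bxi|\be_m}=\xi_m^*(\bphi)$; hence $\braket{\bxi|\be_m\be_m^\top|\bxi}=|\xi_m(\bphi)|^2$. Recognizing $\braket{\bpsi|\bM_m|\bpsi}=F_m(\btheta)$ from \eqref{eq;observable1}, the sum becomes exactly $\sum_{m=1}^M |\xi_m(\bphi)|^2 F_m(\btheta)=F(\btheta,\bphi)$, which establishes the claim. For completeness I would also verify the asserted Hermiticity of $\bM$: each $\be_m\be_m^\top$ is real symmetric and each $\bM_m$ is Hermitian by hypothesis, so every summand $\be_m\be_m^\top\otimes\bM_m$ is a Kronecker product of Hermitian matrices and thus Hermitian, and a finite sum of Hermitian matrices is Hermitian.

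The computation itself is routine; the only point requiring genuine care is the ordering convention of the Kronecker product and the joint ket. I must ensure that the dual factor (dimension $M$) consistently occupies the first tensor slot and the primal factor (dimension $N$) the second, both in $\bM=\sum_{m=1}^M \be_m\be_m^\top\otimes\bM_m$ and in $\ket{\bxi,\bpsi}=\ket{\bxi}\otimes\ket{\bpsi}$. A swapped convention would place the rank-one projectors $\be_m\be_m^\top$ on the wrong subsystem and break the clean factorization into a product of two expectations. This is the single place where an error could creep in, so pinning down the convention up front is the main thing to settle before writing the one-line identity above.
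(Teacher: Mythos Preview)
Your proposal is correct and follows essentially the same approach as the paper: substitute the definition of $\bM$, use linearity and the mixed-product property of the Kronecker product to factor each term into $\braket{\bxi|\be_m\be_m^\top|\bxi}\cdot\braket{\bpsi|\bM_m|\bpsi}$, and identify the factors as $|\xi_m(\bphi)|^2$ and $F_m(\btheta)$. Your additional remarks on Hermiticity and on the tensor-ordering convention are accurate and go slightly beyond what the paper spells out, but the core argument is the same one-line computation.
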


Lemma~\ref{le:term2} asserts that $F(\btheta,\bphi)$ is an expectation applied to the composite state of the two PQCs. This is important as it bypasses the need to measure each one of the numerous constraint expectations $\{F_m(\btheta)\}_{m=1}^M$ of the OPF separately. This powerful feature was first identified in \cite{patel2024} for solving doubly variational SDPs on a quantum computer. In this work, we adopt this idea to the setting of solving QCQPs and SDPs over sparse graphs. We also propose a method for efficiently measuring the observables associated with sparse graphs. We uniquely leverage the underlying graph structure to efficiently measure the three observables using the two PQCs. We defer this discussion to Section~\ref{sec:measure}. If the observables can be measured efficiently, their gradients can be measured efficiently as well via the so-called \emph{parameter-shift rule} (PSR). We review this rule and adapt it to the PD iterations in \eqref{eq:pdmv}.

Let us focus on the gradient $\nabla_{\btheta} F_0(\btheta^t)$. Suppose that the unitary matrix modeling PQC$_p$ takes the form 
\begin{equation}
\bU(\btheta)=\bW_{P+1}\prod_{p=1}^P \exp(-j\theta_p\bG_p)\bW_p,    
\end{equation}
where each $\bG_p$ is a single-qubit Hermitian generator with two distinct eigenvalues $\pm r$, and $\{\bW_p\}_{p=1}^{P+1}$ is a set of fixed gates. Then, the partial derivative of $F_0(\btheta^t)$ with respect to the $p$-th entry of $\btheta$ can be computed by evaluating $F_0(\btheta)$ at two values of $\btheta$~\cite{mitarai2018,schuld2019}:
\begin{equation}\label{eq:psr}
\frac{\partial F_0(\btheta^t)}{\partial \theta_p}=r\left(F_0(\btheta^t+\tfrac{\pi}{4r}\be_p)-F_0(\btheta^t-\tfrac{\pi}{4r}\be_p)\right),
\end{equation}
where $\be_p$ is the $p$-th column of the identity matrix $\bI_P$. If $\bG_p$ is a Pauli rotation from the set $\frac{1}{2} \{\sigma_x,\sigma_y,\sigma_z\}$, we get $r=\frac{1}{2}$ and each $\theta^t_p$ is shifted by $\pm \frac{\pi}{2}$.

The expectation value $F_0(\btheta)$ cannot be evaluated exactly; it can only be estimated via quantum measurements: To estimate $F_0(\btheta)$ for a particular $\btheta$ with precision $\epsilon$, PQC$_p$ has to be sampled $S=\mcO(\epsilon^{-2})$ times. To obtain the complete gradient $\nabla_{\btheta} F_0(\btheta^t)$, the previous process is repeated $2P$ times and involves a total of $2PS$ measurement samples. 

The gradient of the expectation value in \eqref{eq:term3} with respect to $\bphi$ can be computed similarly by running PQC$_d$ by shifting the values of $\bphi^t$ for $2Q$ times or $2QS$ measurement samples. The gradient $\nabla_{\btheta} F(\btheta^t,\bphi^t)$ can be computed by running both PQCs while only shifting the entries of $\btheta^t$; and vice versa for computing the gradient $\nabla_{\bphi} F(\btheta^t,\bphi^t)$. It should be clear by now that, due to the PSR, evaluating gradients of expectations is equivalent to measuring expectations. Section~\ref{sec:measure} discusses how to measure the observables induced from a sparse graph.

Although the plain PD method has been extensively studied for convex/concave and nonconvex/concave saddle-point problems~\cite{kallio1994,uzawa1958,koshal2011,lin2020}, its convergence is not fully understood in the nonconvex/nonconcave setting, like the one in~\eqref{eq:dualThetaPhi}. Recently, a variation of the PD method, termed the extragradient (EG) method~\cite{korpelevich1976}, has been shown to converge under certain conditions for nonconvex/nonconcave saddle-point problems~\cite{minimax}. Compared to the plain PD method, each EG iteration doubles the number of gradient evaluations of the Lagrangian function. More specifically, let vector $\bz^t\coloneqq [\btheta^t;\alpha^t;\bphi^t;\beta^t]$ collect the primal and dual variables at iterate $t$, and define the vector:
\begin{equation}\label{eq:g}
\bg(\bz^t)\coloneqq [\nabla_{\btheta}\mcL(\bz^t);\nabla_{\alpha}\mcL(\bz^t);-\nabla_{\bphi}\mcL(\bz^t);-\nabla_{\beta}\mcL(\bz^t)].
\end{equation}
Then, the EG iterations can be expressed as 
\begin{subequations}\label{eq:eg}
    \begin{align} 
    \bbz^t&=\bz^t-2\mu_z^t\bg(\bz^t)\label{eq:eg:3},\\
    \bz^{t+1}&=\bz^t-\mu_z^t\bg(\bbz^t) \label{eq:eg:4},
    \end{align}
\end{subequations}
for a step size $\mu_z^t>0$. In essence, the EG method first computes an intermediate point $\bbz^t$ based on the plain PD method, and then updates $\bz^t$ to $\bz^{t+1}$ upon evaluating the gradient operator at the intermediate point. Despite doubling the number of gradient evaluations, EG iterates feature favorable convergence properties as established analytically in Section~\ref{sec:analysis} and corroborated numerically in Section~\ref{sec:tests}.

Before concluding this section, it is worth noting that, as with other VQAs, training the primal-dual PQC pair could be impeded by the so-called \emph{barren plateaus} issue, according to which gradients vanish exponentially fast in terms of the number of qubits and depth of PQCs~\cite{mcclean2018,Larocca2025}. This phenomenon can occur when designing VQAs for large-scale problems and/or using overly expressive PQCs. A related concern is that the PQC architectures that are both sufficiently expressive and free from barren plateaus are often efficiently simulated by classical algorithms~\cite{cerezo2023}. Interestingly, recent results show that the so-called dynamic PQCs~\cite{dynamic}, which incorporate early measurements followed by measurement-conditional unitaries, and quantum recurrent embedding neural networks~\cite{jing2025quantumrecurrentembeddingneural} can provably avoid barren plateaus while retaining high expressivity. Moreover, dynamic PQCs inherit the worst-case classical hardness of the universal unitary circuits, precluding the possibility of classical simulatability. Investigating the performance of the doubly variational \cmb{QCQP/SDP} on dynamic PQCs or quantum recurrent embedding neural networks for large-scale problems is an interesting future direction.

\section{Measuring Observables Induced by Sparse Graphs}\label{sec:measure}
This section reviews the extended Bell measurement (XBM) method proposed in \cite{Kondo2022} and explains how it can be adapted to efficiently measure observables induced from sparse graphs.

\subsection{Extended Bell measurement (XBM) method}\label{subsec:xbm}

Consider measuring the expectation value $\bra{\bpsi(\btheta)}\bM_m\ket{\bpsi(\btheta)}$ associated with the $N\times N$ Hermitian matrix $\bM_m$. The matrix~$\bM_m$ is conventionally expressed as
\begin{equation}\label{eq:xbm1}
\bM_m=\sum_{i=0}^{N-1}\sum_{j=0}^{N-1} \braket{i|\bM_m|j} \ket{i}\!\bra{j}.
\end{equation}
The XBM method proposes the alternative expansion:
\begin{equation}\label{eq:xbm2}
\bM_m=\sum_{c=0}^{N-1}\underbrace{\sum_{i=0}^{N-1} \braket{i|\bM_m|i\oplus c} \ket{i}\!\bra{i\oplus c}}_{\coloneqq \bM_m^c}=\sum_{c=0}^{N-1}\bM_m^c,
\end{equation}
where $\oplus$ denotes the bitwise exclusive OR (XOR) operator if indices $i$ and $c$ are represented in binary form. To establish the equivalence between \eqref{eq:xbm1} and \eqref{eq:xbm2}, we need to show that every pair $(i,j)$ corresponds to a unique pair $(i,i\oplus c)$, which indeed holds for $c=i\oplus j$. Due to this alternative indexing, the matrix $\bM_m$ can be decomposed into $N$ matrices $\{\bM_m^c\}_{c=0}^{N-1}$ defined as
\[\braket{i|\bM_m^c|j}=
\begin{cases}
\braket{i|\bM_m|j}, &\text{if}~i\oplus j =c\\ 
0, & \text{otherwise}.
\end{cases}
\]
For example, the matrix $\bM_m^0$ carries the diagonal entries of $\bM_m$ because $i\oplus i=0$. Figure~\ref{fig:colorMatrix} uses color coding to identify the index pairs $(i,j)$ corresponding to the same $c$ for an observable over three qubits~\cite{Kondo2022}. We will henceforth say that all index pairs $(i,j)$ satisfying $i\oplus j =c$ for a specific $c$ belong to the same \emph{color}~$c$.

\begin{figure}[t]
\centering
\includegraphics[width=0.8\linewidth]{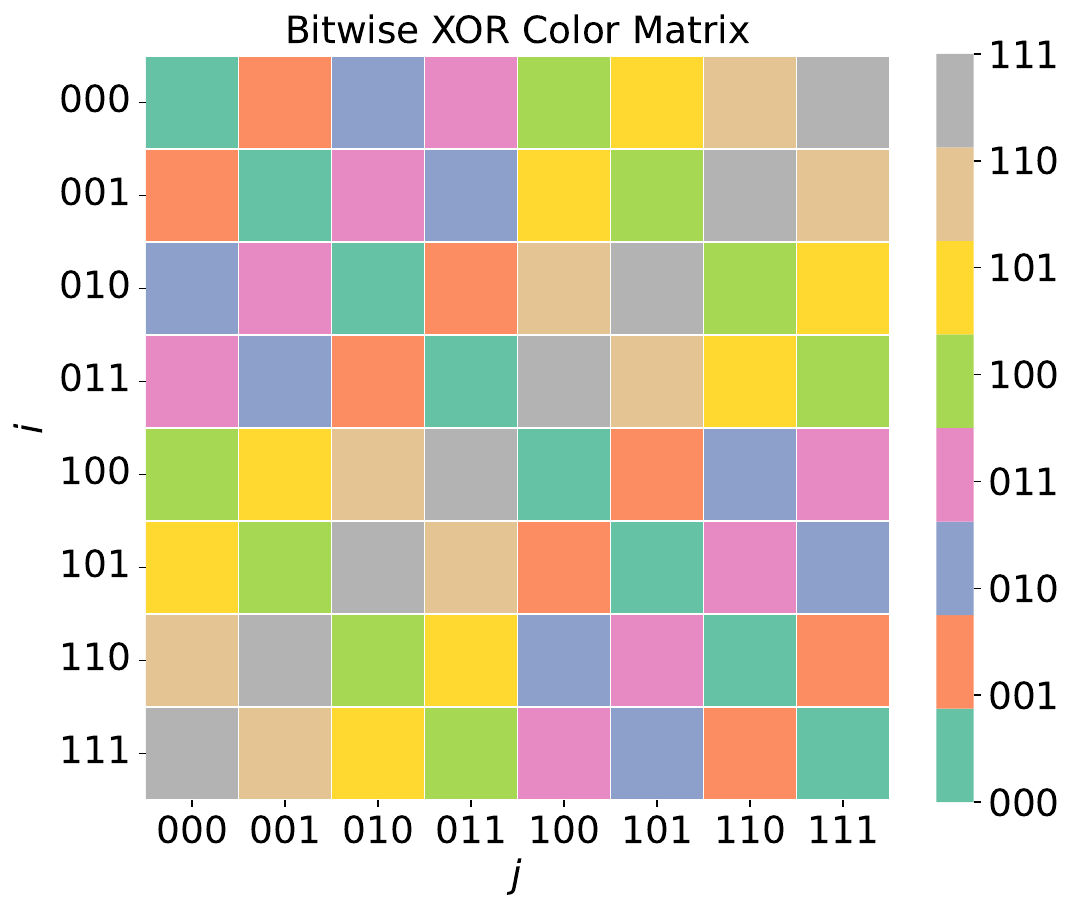}
\caption{The XBM method of \cite{Kondo2022} groups the entries of an $N\times N$ Hermitian matrix $\bM_m$ into $N$ groups or colors. This XBM grouping or entry coloring is shown here for an $8\times 8$ matrix associated with $\log 8=3$ qubits. Each color $c$ is identified by the binary form of its index $c$. Color $c$ includes all matrix entries with index pairs $(i,j)$ for which $i\oplus j =c$. For example, the grey color $c=7=\ket{111}$ includes all entries lying on the main anti-diagonal of $\bM_m$. Note that the color arrangement is symmetric.}
\label{fig:colorMatrix}
\end{figure}

Because $i\oplus j = j\oplus i$, every $\bM_m^c$ is Hermitian. Therefore, the matrices
\begin{equation}\label{eq:xbm3}
\hbM_m^c\coloneqq \real\{\bM_m^c\}\quad\text{and}\quad\cbM_m^c\coloneqq \iota\imag\{\bM_m^c\}
\end{equation}
are also Hermitian for all $c$, and $\cbM_m^0=\bzero$. 

Based on~\eqref{eq:xbm2}, the expectation value associated with $\bM_m$ can be expressed as a sum of $2N-1$ observables:
\begin{equation}\label{eq:xbm4}
\braket{\bpsi|\bM_m|\bpsi}=\sum_{c=0}^{N-1}\braket{\bpsi|\hbM_m^c|\bpsi}+ \sum_{c=1}^{N-1}\braket{\bpsi|\cbM_m^c|\bpsi}.
\end{equation}
Because $\hbM_m^0$ is diagonal, the expectation value $\braket{\bpsi|\hbM_m^0|\bpsi}$ can be measured in the computational basis. Upon measuring $\ket{\bpsi}$, the binary vector outcome~$\ket{i}$ is observed with probability $|\psi_i|^2$. Define a random variable taking the value $\braket{i|\hbM_m^0|i}$ when outcome $\ket{i}$ is observed. Then, the expectation $\braket{\bpsi|\hbM_m^0|\bpsi}$ in~\eqref{eq:xbm4} is the expected value of this random variable.   

The remaining expectation values in \eqref{eq:xbm4} are not as straightforward to measure because the matrices $(\hbM_m^c,\cbM_m^c)$ are non-diagonal for $c=1,\ldots,N-1$. The advantage of XBM's color decomposition is that these matrices can be diagonalized by unitaries corresponding to qubit-efficient circuits. This key result from \cite{Kondo2022} is summarized in the following lemma.

\begin{lemma}[\cite{Kondo2022}]\label{le:kondo}
Consider the color index $c\geq 1$. Let index $k_c$ denote the most significant bit taking the value of one in the binary representation of $\ket{c}=\ket{c_{\log N-1}\cdots c_1 c_0}$. Then, the matrices $\hbM_m^c$ and $\cbM_m^c$ admit the eigenvalue decompositions:
\begin{equation*}
\hbM_m^c=\bU_c \hbLambda_m^c \bU_c^\top 
\quad \text{and} \quad
\cbM_m^c=\bS_{c}^\dag\bU_c \cbLambda_m^c \bU_c^\top\bS_{c}
\end{equation*} 
where
\begin{itemize}
\item[{i)}] unitary matrix $\bU_c$ can be implemented using one Hadamard gate and at most $(\log N-1)$ CNOT gates;
\item[{ii)}] unitary matrix $\bS_{c}$ applies a phase gate $\bS$ on qubit~$k_c$ and acts trivially on all other qubits; and 
\item[{iii)}] the diagonal eigenvalue matrices are
\begin{align*}
\hbLambda_m^c&\coloneqq \frac{1}{2}\sum_{i=0}^{N-1}\braket{i| \hbM_m^c |i\oplus c}\left(\ket{i}\!\bra{i}-\bX_{k_c}\ket{i}\!\bra{i}\bX_{k_c}^\top\right)\\
\cbLambda_m^c&\coloneqq \frac{-\iota}{2}\sum_{i=0}^{N-1}\braket{i| \cbM_m^c |i\oplus c}\left(\ket{i}\!\bra{i}-\bX_{k_c}\ket{i}\!\bra{i}\bX_{k_c}^\top\right)
\end{align*}
with $\bX_{k_c}$ applying a NOT gate on qubit $k_c$.
\end{itemize}
\end{lemma}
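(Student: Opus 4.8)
The plan is to reduce the claim to a collection of decoupled $2\times 2$ problems and then show that a single shallow circuit solves all of them at once. For any color $c\geq 1$ the XOR map $i\mapsto i\oplus c$ is a fixed-point-free involution on $\{0,\ldots,N-1\}$, so it partitions the index set into $N/2$ disjoint pairs $\{i,i\oplus c\}$. Since $\bM_m^c$ has nonzero entries only at positions $(i,i\oplus c)$ and $i\oplus i=0\neq c$, its diagonal vanishes, and after reordering the basis so that paired indices are adjacent, $\bM_m^c$ (hence $\hbM_m^c$ and $\cbM_m^c$) is block diagonal with one $2\times 2$ block per pair. I would first compute these blocks explicitly: writing $a_i\coloneqq\braket{i|\bM_m^c|i\oplus c}$ and using $a_{i\oplus c}=a_i^*$, the block of $\hbM_m^c$ on $\mathrm{span}\{\ket{i},\ket{i\oplus c}\}$ equals $\real(a_i)\,\sigma_x$, while the block of $\cbM_m^c$ is purely imaginary off-diagonal and Hermitian, hence proportional to $\sigma_y$.

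Next I would diagonalize a single block. Because $H\sigma_x H=\sigma_z$, the vectors $\tfrac{1}{\sqrt 2}(\ket{i}\pm\ket{i\oplus c})$ are eigenvectors of the $\hbM_m^c$-block with eigenvalues $\pm\real(a_i)$. For the $\cbM_m^c$-block I would use $\sigma_y=\bS\sigma_x\bS^\dag$: the single-qubit phase gate $\bS_c$ on qubit $k_c$ conjugates each $\sigma_y$-type block into a $\sigma_x$-type one, after which the same $\bU_c$ diagonalizes it. This is exactly what produces the $\bS_c$ factors and the $-\iota$ prefactor appearing in the stated decomposition of $\cbM_m^c$ and in $\cbLambda_m^c$.

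The core of the proof, and where I expect the main difficulty, is realizing all $N/2$ block diagonalizations simultaneously with one circuit $\bU_c$. Let $k_c$ be the most significant set bit of $c$ and choose the representative of each pair to be the index whose $k_c$-th bit is $0$. I would define $\bU_c$ as a Hadamard on qubit $k_c$ followed by a CNOT fan-out with control $k_c$ and one target on every other bit position $k\neq k_c$ with $c_k=1$. Tracking the action on an arbitrary computational basis state shows $\bU_c\ket{i}=\tfrac{1}{\sqrt 2}\big(\ket{i^{(0)}}+(-1)^{i_{k_c}}\ket{i^{(0)}\oplus c}\big)$, where $i^{(0)}$ is $i$ with its $k_c$-th bit cleared, so the columns of $\bU_c$ run through precisely the symmetric/antisymmetric eigenvectors of all blocks, the sign $(-1)^{i_{k_c}}$ selecting the $\pm$ eigenvector. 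The delicate point is verifying global consistency of the representative choice: as $i$ ranges over all indices, $(i^{(0)},i_{k_c})$ ranges over every (representative, sign) pair, so distinct basis states map to distinct orthonormal eigenvectors and each block is covered exactly twice; this is pure XOR bookkeeping but must be done with care. Since $\bU_c$ is a real matrix one has $\bU_c^\top=\bU_c^\dag=\bU_c^{-1}$, matching the transpose in the statement. Counting gates then gives one Hadamard and $(\text{number of set bits of }c)-1\leq\log N-1$ CNOTs, establishing (i) and (ii).

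Finally I would read off the eigenvalue matrices. Collecting the eigenvalue $(-1)^{i_{k_c}}\real(a_{i^{(0)}})$ attached to each column of $\bU_c$ yields a diagonal $\hbLambda_m^c$, and undoing the $\bS_c$ conjugation yields $\cbLambda_m^c$; rewriting the resulting $\pm$ sign pattern through the conjugation $\bX_{k_c}\ket{i}\bra{i}\bX_{k_c}^\top=\ket{i\oplus 2^{k_c}}\bra{i\oplus 2^{k_c}}$ recasts them into the compact closed forms of item (iii). Checking that this $\bX_{k_c}$-conjugation correctly encodes the eigenvalue assignment across each pair is the remaining routine but error-prone step, and together with the gate count it completes the proof.
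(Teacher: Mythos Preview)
The paper does not actually prove this lemma; it is stated as a cited result from \cite{Kondo2022} and used as a black box, so there is no in-paper proof to compare against. That said, your proposal is correct and is essentially the argument given in the original reference: the XOR involution $i\mapsto i\oplus c$ splits the basis into $N/2$ pairs, each $2\times2$ block of $\hbM_m^c$ is $\real(a_i)\,\sigma_x$ and each block of $\cbM_m^c$ is a multiple of $\sigma_y$, the phase gate on qubit $k_c$ rotates $\sigma_y$-blocks to $\sigma_x$-blocks, and the Hadamard-plus-CNOT-fan-out circuit simultaneously diagonalizes all $\sigma_x$-blocks because $\bU_c\ket{j}=\tfrac{1}{\sqrt2}\bigl(\ket{j^{(0)}}+(-1)^{j_{k_c}}\ket{j^{(0)}\oplus c}\bigr)$ ranges over exactly the $\pm$ eigenvectors of every block as $j$ varies. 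Your bookkeeping on the representative choice via the $k_c$-th bit and the gate count $1+(\text{popcount}(c)-1)\leq \log N$ are both right, and the final rewriting of eigenvalues through $\bX_{k_c}$-conjugation is the standard way to package the $\pm$ sign into the stated diagonal formulas.
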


The practical value of Lemma~\ref{le:kondo} is that the expectation value
\[\braket{\bpsi| \hbM_m^c| \bpsi}=\braket{\bpsi| \bU_c \hbLambda_m^c \bU_c^\top| \bpsi}\]
can be measured by transforming $\ket{\bpsi}$ to $\bU_c^\top\ket{\bpsi}$ and then measuring the latter in the computational basis using the diagonal matrix $\hbLambda_m^c$. The transformation is qubit-efficient because the unitary $\bU_c$ involves at most $\log N$ elementary gates. The expectation value $\braket{\bpsi| \cbM_m^c| \bpsi}$ can be measured similarly, with at most $\log N+1$ elementary gates; $\log N$ for $\bU_c$ plus one for $\bS_{c}$. A critical feature is that the unitary $\bU_c$ diagonalizing $(\hbM_m^c,\cbM_m^c)$ depends solely on the color (locations of nonzero entries) of $(\hbM_m^c,\cbM_m^c)$, and \emph{not} on the values of those entries. Only the eigenvalue matrices $(\hbLambda_m^c,\cbLambda_m^c)$ depend on those values, and can be computed without the need for an eigenvalue decomposition. 

Although each expectation value in \eqref{eq:xbm4} can be estimated efficiently, there are $(2N-1)$ of them. Unfortunately, each one of them requires a different quantum circuit. Therefore, the expectation value $\braket{\bpsi| \bM_m|\bpsi}$ can be measured efficiently only if all nonzero entries of $\bM_m$ are covered by a few colors. Let $\mcC$ denote the set of colors for which either $\hbM_m^c$ or $\cbM_m^c$ is nonzero, so that $\bM_m$ is decomposed as $\bM_m=\sum_{c\in\mcC}\bM_m^c$. For the XBM protocol to be qubit-efficient, the number of colors $C=|\mcC|$ should scale polynomially in $\log N$. Reference~\cite{Kondo2022} establishes that \emph{banded} matrices occupy relatively few colors. Specifically, it is shown that a $k$-banded matrix can be decomposed using $C=\mcO(k\log N)$ rather than $N$ colors. For the example of Figure~\ref{fig:colorMatrix}, a matrix with bandwidth $k=1$ requires 4 colors. 

Based on Lemma~\ref{le:kondo}, if the expectation value $F_m(\btheta)$ occupies $C$ colors, it can be decomposed into $2C-1$ expectations:
\begin{equation}\label{eq:xbm5}
F_m(\btheta)=\sum_{c=1}^{2C-1}F_m^c(\btheta)=\sum_{c=1}^{2C-1}\braket{\bpsi_c(\btheta)|\bLambda_m^c|\bpsi_c(\btheta)},
\end{equation}
where $\ket{\bpsi_c(\btheta)}=\bU_c^\top\ket{\bpsi(\btheta)}$ and $\bLambda_m^c=\hbLambda_m^c$ for $c=1,\ldots,C$; and $\ket{\bpsi_c(\btheta)}=\bU_c^\top\bS_c\ket{\bpsi(\btheta)}$ and $\bLambda_m^c=\cbLambda_m^c$ for $c=C+1,\ldots,2C-1$. Each one of the observables involved can be diagonalized by a different unitary matrix. 

We next describe how to effect a measurement matrix $\bM$ that uses a few colors.

\subsection{Node permutation for qubit-efficient measurements}\label{subsec:xbm4ps}
To solve the doubly parameterized \cmb{QCQP/SDP}, we need to evaluate the Lagrangian function in \eqref{eq:Lagrangian1} by measuring the observable in Lemma~\ref{le:term2} so that we can compute its gradients over $(\btheta,\alpha;\bphi,\beta)$. We next adopt the XBM protocol to efficiently measure the related observables in~\eqref{eq:Lagrangian1}.

According to \eqref{eq:M}, observable matrix $\bM$ has the $N\times N$ matrices $\{\bM_m\}_{m=1}^M$ as its diagonal blocks. Despite being block-diagonal, matrix $\bM$ may still occupy up to $N$ colors. Critically, we do not necessarily have to work with $\bM$. Instead, we can permute variables to effect a new measurement matrix that is amenable to more efficient measurements. Suppose we permute vector $\bx$ by a permutation matrix $\bP$ to get $\bP\bx$. This entails reordering the nodes of the underlying graph, which is a trivial task that can be performed before solving~\eqref{eq:qc}. The observable in Lemma~\ref{le:term2} can be expressed as
\begin{equation}\label{eq:perm1}
\sum_{m=1}^M \lambda_m \bx^\dag \bM_m \bx=\sum_{m=1}^M\lambda_m\bx^\dag \bP^\top(\bP\bM_m\bP^\top) \bP\bx.
\end{equation}

Ideally, we would like to design a \emph{single} $\bP$ to minimize the maximum bandwidth across all $\bP\bM_m\bP^\top$. For many conic programs defined over sparse graphs, the constraint matrices $\bM_m$ are dictated by a common underlying graph. In such problems, there often exists a representative matrix $\bY$, for instance, the graph Laplacian matrix, whose sparsity pattern captures the \emph{union} of the sparsity patterns of all $\bM_m$. Therefore, we propose designing $\bP$ to minimize the bandwidth of $\bP\bY\bP^\top$. This is because minimizing the bandwidth of $\bP\bY\bP^\top$ is equivalent to minimizing the maximum bandwidth across all $\bP\bM_m\bP^\top$. In the OPF problem (detailed in Section~\ref{sec:model}), the representative matrix is the node admittance matrix $\bY$ of the power network. We henceforth assume such a representative matrix $\bY$ is available or can be constructed for the problem at hand.

Unfortunately, minimizing the bandwidth of a matrix is an NP-hard problem even if the matrix is sparse~\cite{feige2000}. Nonetheless, heuristic approaches, such as the reverse Cuthill-McKee (RCM) algorithm~\cite{rcm}, provide reasonably good solutions in linear complexity $\mcO(L_e)=\mcO(N)$, where $L_e$ is the number of edges in the underlying graph. The RCM algorithm has been used in classical computing to reduce the bandwidth of sparse matrices arising in various domains, including power systems, where it is applied to the node admittance matrix $\bY$~\cite{rcm_for_powergrids1,rcm_for_powergrids2,rcm_for_powergrids3}. Albeit permuting $\bM_m$'s using the RCM algorithm takes $\mcO(N)$ operations, the crucial point is that this permutation can be performed once for a given graph topology. In applications where the graph structure remains fixed while other problem parameters vary, the same permutation can be reused across multiple problem instances. For example, in the OPF problem, the power system topology changes infrequently and affects only a few edges of the graph. As long as the network topology remains unchanged, an instance of the OPF is characterized by parameters $b_m$, which vary frequently. In contrast to the permutation step, OPF instances over the same topology must be solved repeatedly, every few minutes during real-time operation and potentially millions of times in long-term planning.

\begin{figure}[t]
\centering
\includegraphics[width=0.9
\linewidth]{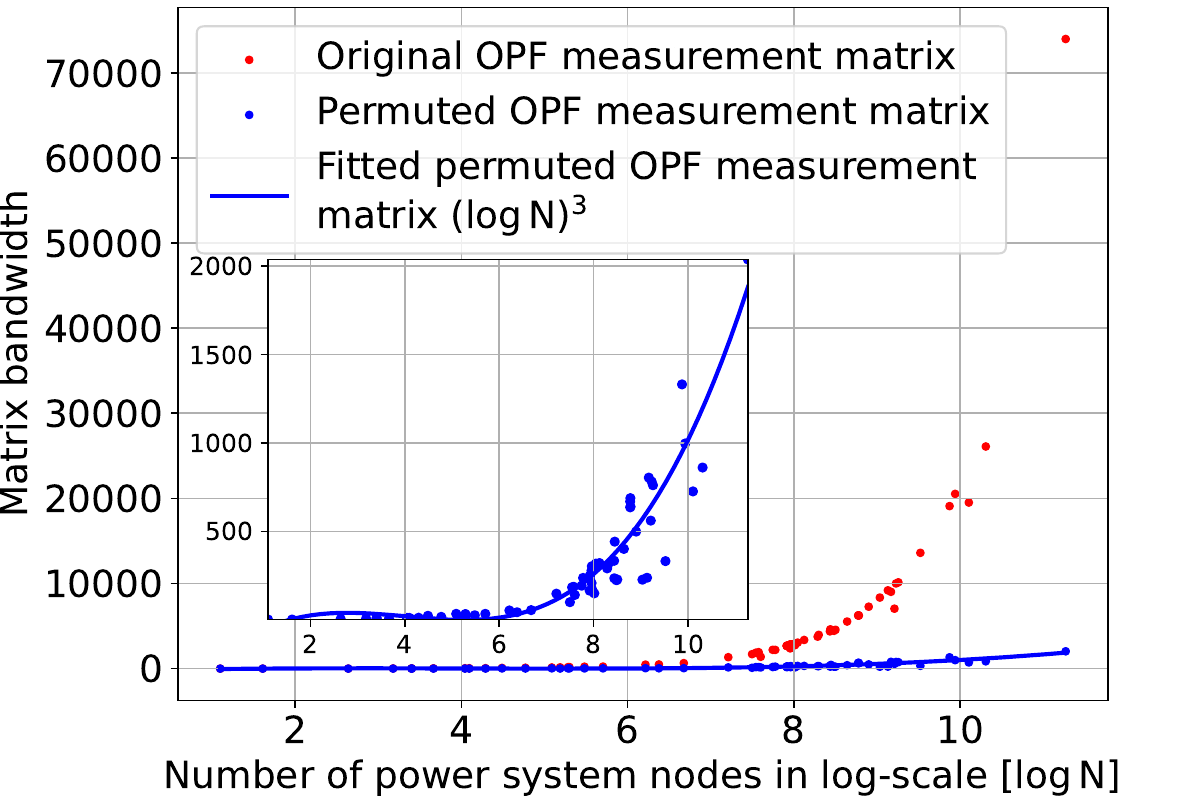}
\caption{The bandwidth of the original and permuted measurement matrices as a function of the logarithmic network size $(\log N)$ for the power system graphs in \texttt{pglib}. Nodes were permuted by the RCM algorithm. Permuted matrices feature patently smaller bandwidths. Upon data fitting and 5-fold cross-validation over different polynomial and exponential functions, the reduced bandwidth was numerically found to scale as $(\log N)^3$. The run time of the RCM algorithm (implemented by function \texttt{scipy.sparse.csgraph.reverse\_cuthill\_mckee} in Python 3.11.9) on the largest power system graph of $N=78,784$ is a few seconds using a MacBook laptop equipped with an M3 Pro processor and 36 GB of RAM.}
\label{fig:bw}
\end{figure}

\begin{figure}[t]
\centering
\includegraphics[width=0.9\linewidth]{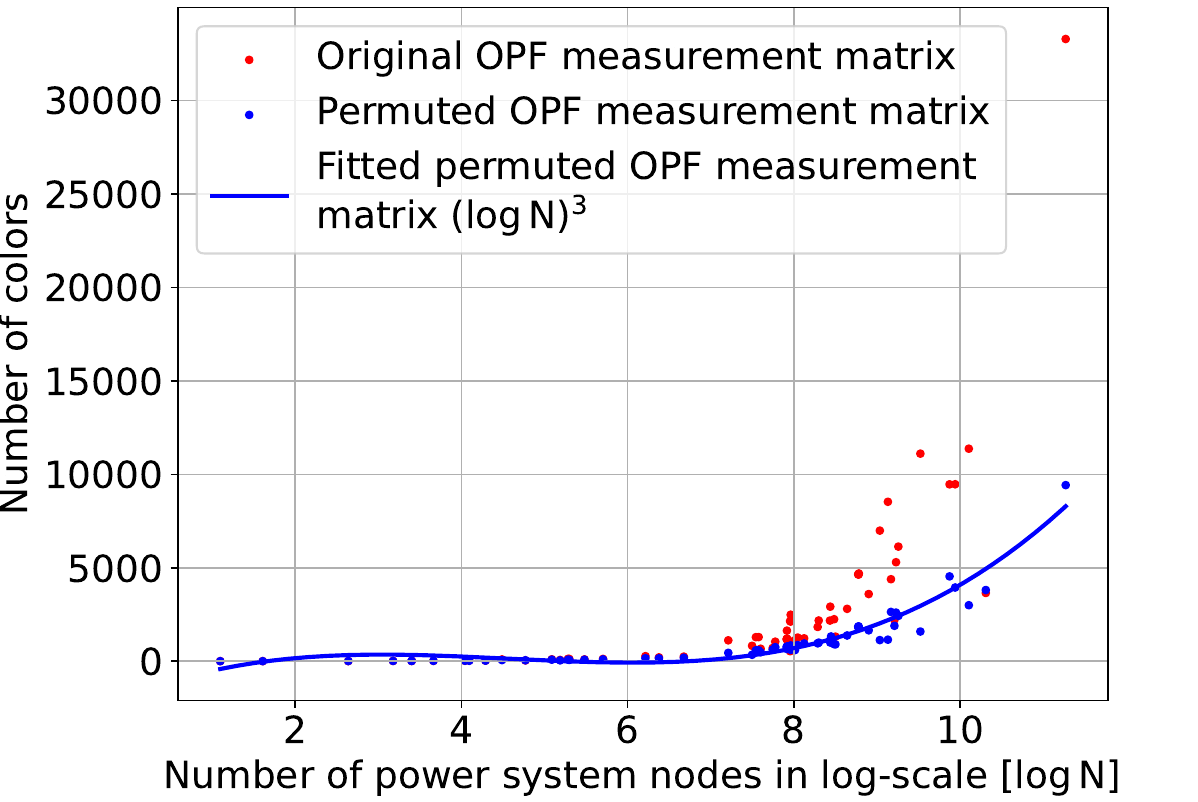}
\caption{The number of colors in the original and permuted measurement matrices as a function of the logarithmic network size $(\log N)$ for the power system graphs in \texttt{pglib}. Permuting nodes using the RCM algorithm was numerically found to significantly reduce the number of colors in use. Data fitting and 5-fold cross validation showed that the number of colors scales as $(\log N)^3$.}
\label{fig:color}
\end{figure}

We numerically validated the effect of node permutation on the bandwidth of $\bM$. We ran the RCM algorithm on the 66 benchmark power system models of the \texttt{pglib} dataset~\cite{pglib}. Figure~\ref{fig:bw} shows the bandwidth of $\bM$ before and after node permutation. Because the RCM solution depends on the initial node, we conducted 200 Monte Carlo runs per power system, with the initial node drawn uniformly at random per run. Out of the 200 runs, we retained the permutation that yielded the smallest bandwidth. Evidently, permuting nodes reduces the bandwidth of $\bM$ dramatically. Upon data fitting and cross-validation, the reduced bandwidth was numerically found to scale as $(\log N)^3$. We also empirically tested the effect of node permutation on the number of colors in $\bM$ for these power systems. Figure~\ref{fig:color} shows that node permutation significantly reduces the number of colors, too. If the bandwidth of $\bP\bM\bP^\top$ is $k=\mcO((\log N)^3)$ per data fitting in Figure~\ref{fig:bw}, the analysis of \cite{Kondo2022} upper bounds the number of its colors as $C=\mcO(k\log N)=\mcO((\log N)^4)$. However, the number of colors in $\bP\bM\bP^\top$ was found to scale as $(\log N)^3$.

We have so far focused on measuring the second observable of the Lagrangian function in \eqref{eq:Lagrangian1}. The third observable is diagonal and is thus straightforward to measure. The first observable $F_0(\btheta)$, however, is defined over the non-diagonal matrix $\bM_0$. 
Fortunately, the sparsity pattern of $\bM_0$ is a subset of the sparsity pattern of $\bY$. Therefore, if nodes are permuted so that $\bM$ occupies a few colors $C$, the same permutation works well also for $\bM_0$. In other words, matrix $\bP\bM_0\bP^\top$ would occupy $C$ colors or fewer.

Since permuting the underlying graph can yield measurement matrices with much fewer colors, we suggest the following workflow for solving doubly parameterized QCQPs over sparse graphs: \emph {i)} Identify the underlying graph and construct a representative matrix $\bY$ whose sparsity pattern is the union of the sparsity patterns of all $\bM_m$. For the OPF, $\bY$ is the node admittance matrix; \emph{ii)} Feed the sparsity pattern of $\bY$ into the RCM algorithm to find a near-optimal node permutation matrix $\bP$; \emph{iii)} Formulate~\eqref{eq:qc} based on the permuted nodes; \emph{iv)} Solve~\eqref{eq:qc} using the VQA of Section~\ref{sec:parameterizedopf}; and \emph{v)} If needed, apply the reverse permutation to the primal solution of~\eqref{eq:qc} to recover the original node ordering. It is henceforth assumed that primal variables and measurement matrices have been permuted already. Therefore, we will henceforth use the lighter notation $\bM$ and $\bM_m$ instead of $\bP\bM\bP^\top$ and $\bP\bM_m\bP^\top$.

\subsection{Implementation details}\label{subsec:implement}

\begin{figure}[t]
\centering
\includegraphics[width=0.85
\linewidth]{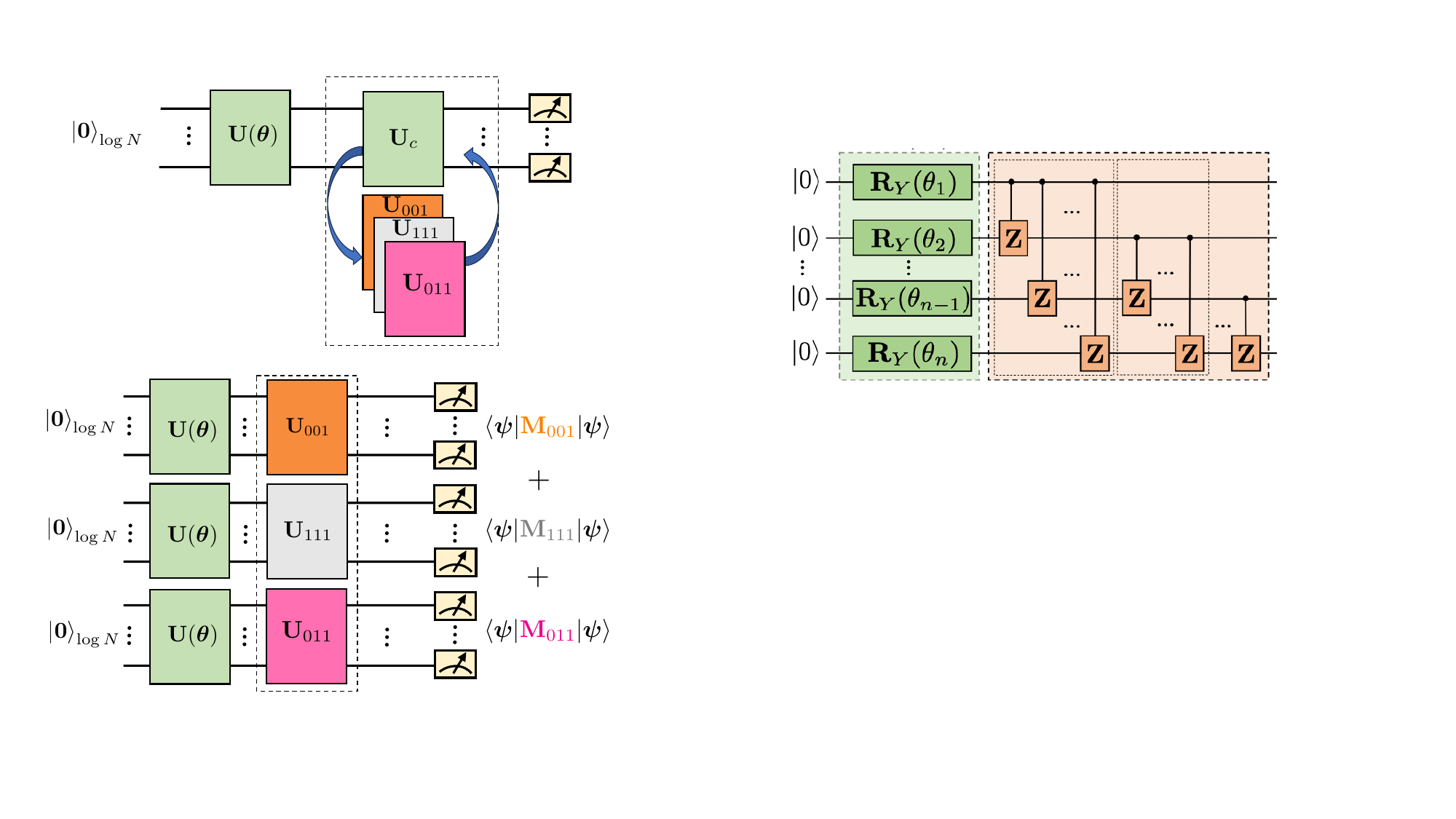}
\caption{Implementation of rotation unitaries. \emph{Top:} A unitary $\bU_c$ is recompiled $2C-1$ times to consider for $2C-1$ rotation unitaries running in sequence. \emph{Bottom:} $2C-1$ rotation unitaries running in parallel.}
\label{fig:rotation}
\end{figure}

We next explain how to estimate the Lagrangian gradients needed in \eqref{eq:pdmv} using the primal/dual PQC pair. We start with $\nabla_{\alpha}\mcL$ in~\eqref{eq:pdm:alpha}, which entails estimating $F_0(\btheta)$ and $F(\btheta, \bphi)$. 
To simplify the presentation of measuring $F_0(\btheta)$, suppose $\bM_0$ occupies all $C$ colors used in $\bY$. According to \eqref{eq:xbm5}, measuring $F_0(\btheta)$ requires $2C-1$ circuits. Each rotated primal circuit $c$ involves the same primal PQC $\bU(\btheta)$ followed by a different XBM unitary $\bU_c$. These $2C-1$ circuits can be executed sequentially or in parallel, as shown in Figure~\ref{fig:rotation}. The sequential implementation operates on $\log N$ qubits and requires compiling $2C-1$ circuits in a sequential fashion. The parallel implementation is $2C-1$ times faster but operates on $\log N(2C-1)$ qubits and requires replicating the primal PQC $2C-1$ times.

Measuring observable $F(\btheta,\bphi)$ is more complicated as it involves estimating the double summation
\begin{align}\label{eq:term2:measure}  
F(\btheta,\bphi)&=\sum_{m=1}^M\sum_{c=1}^{2C-1} |\xi_m(\bphi)|^2F^c_m(\btheta)\nonumber\\
&=\sum_{c=1}^{2C-1}\sum_{m=1}^M |\xi_m(\bphi)|^2F^c_m(\btheta).
\end{align}
Interestingly, for a specific index $c$, the corresponding sum $\sum_{m=1}^M |\xi_m(\bphi)|^2F^c_m(\btheta)$ can be evaluated in two steps. We first measure the dual PQC and observe $\ket{m}$ with probability $|\xi_m(\bphi)|^2$. Given the observed $\ket{m}$, we can now easily estimate the expectation $F_m^c(\btheta)$ corresponding to a diagonal observable. 

Measuring $\nabla_{\beta}\mcL$ in~\eqref{eq:pdm:beta} is trivial as it depends on $F(\btheta, \bphi)$ and $G(\bphi)$.

The gradient $\nabla_{\btheta}\mcL(\btheta)$ can be estimated based on the PSR. For example, the partial derivative of $\mcL$ with respect to $\theta_p$ can be found as
\begin{multline}\label{eq:grad_theta}
\frac{\partial \mcL}{\partial \theta_p}=\frac{\alpha^2}{2}\left(F_0(\btheta^+_p)-F_0(\btheta^-_p)\right) \\
    +\frac{\alpha^2\beta^2}{2}\left(F(\btheta^+_p,\bphi)-F(\btheta^-_p,\bphi)\right)
\end{multline}
where $\btheta^{\pm}_p\coloneqq \btheta \pm \frac{\pi}{2}\be_p$. According to~\eqref{eq:grad_theta}, estimating $\nabla_{\btheta}\mcL$ requires running $2C-1$ rotated primal PQCs, $2P$ times each for $\{\btheta^+_p,\btheta^-_p\}_{p=1}^P$, while the dual PQC is set at $\bphi$. 

Similarly, evaluating $\nabla_{\bphi}\mcL$ in~\eqref{eq:pdm:phi} requires running $2C-1$ rotated primal circuits set at $\btheta$, and the dual PQC for $2Q$ values for $\{\bphi^+_q,\bphi^-_q\}_{q=1}^Q$ as 
\begin{multline}\label{eq:grad_phi}
\frac{\partial \mcL}{\partial \phi_q}=\frac{\alpha^2\beta^2}{2}\left(F(\btheta,\bphi^+_q)-F(\btheta,\bphi^-_q)\right) \\
    -\frac{\beta^2}{2}\left(G(\bphi^+_q)-G(\bphi^-_q)\right)
\end{multline}
where $\bphi^{\pm}_q\coloneqq \bphi \pm \frac{\pi}{2}\be_q$.

Altogether, each PD iteration requires running $\left(2C-1\right)\left(2P+1\right)$ rotated primal circuits and $2Q+1$ dual circuits. Each circuit is run for $S$ samples. According to~\cite{peruzzo2014}, the number of PQC parameters $(P,Q)$ is expected to be $\tilde{\mcO}(\log N)$. If the number of colors grows approximately as $C=\mcO((\log N)^3)$, each PD iteration in~\eqref{eq:pdmv} incurs a computational complexity of $\tilde{\mcO}(\log N)$, which is more efficient compared to $\mcO(sN^2)$ of the PD iteration in~\eqref{eq:pd1} running on a classical computer.

\section{Convergence analysis}\label{sec:analysis}
This section studies the convergence of the EG iterates in~\eqref{eq:eg}. For a general optimization problem, if the Lagrangian function is convex in the primal variables, concave in the dual variables, and has a Lipschitz gradient in primal and dual variables, the EG iterates converge to a saddle point of the Lagrangian function; see~\cite{korpelevich1976}. For nonconvex/nonconcave saddle-point problems, however, establishing convergence of the EG iterates is challenging. Reference~\cite{minimax} has recently shown that, under certain conditions, the EG iterates converge to an approximately stationary point of the Lagrangian function. Of course, a stationary point may not necessarily be a saddle point of the Lagrangian. Similar convergence claims have also been established in the stochastic setting, where EG updates rely on noisy yet unbiased gradient estimates in lieu of the actual Lagrangian gradients. Given that the Lagrangian function in~\eqref{eq:dualThetaPhi} is nonconvex in $(\btheta,\alpha)$ and nonconcave in $(\bphi,\beta)$, we adopt the convergence proof from~\cite{minimax} for the EG iterates in~\eqref{eq:eg}  and provide an upper bound on the number of measurement samples across all EG iterations. To this end, the following lemma shows that the operator $\bg(\bz)$ is $ L$-Lipschitz continuous; see Appendix~\ref{sec:app:le3} for a proof.

\begin{lemma}\label{le:lipschitz}
The vector-valued function $\bg(\bz)$ defined in~\eqref{eq:g} is Lipschitz continuous with the Lipschitz constant 
\begin{align}\label{eq:lipschitz}&L=\left(P\bar{\alpha}^2\bar{\beta}^2+Q\bar{\alpha}^2\bar{\beta}^2+2\bar{\alpha}\bar{\beta}^2+2\bar{\alpha}^2\bar{\beta}\right)\max_{m}\left\|\bM_m\right\|\nonumber\\
&+\max\{(P\bar{\alpha}^2+2\bar{\alpha})\left\|\bM_0\right\|,(Q\bar{\beta}^2+2\bar{\beta}) \max_m \left|b_m\right|\},
\end{align} 
where $\bar{\alpha}$ and $\bar{\beta}$ are upper bounds on the optimal values of $\alpha$ and $\beta$, respectively. 
\end{lemma}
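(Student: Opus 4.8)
The plan is to bound the spectral norm of the Jacobian of $\bg$ uniformly over the relevant domain and then invoke the mean-value inequality: since $\bg$ is continuously differentiable and the admissible set $\{(\btheta,\alpha,\bphi,\beta):0\le\alpha\le\bar{\alpha},\,0\le\beta\le\bar{\beta}\}$ is convex, the estimate $\sup_{\bz}\|\bJ_{\bg}(\bz)\|\le L$ yields $\|\bg(\bz_1)-\bg(\bz_2)\|\le L\|\bz_1-\bz_2\|$. Here $\bJ_{\bg}$ is the Hessian of $\mcL$ with the two dual row-blocks (those for $\bphi$ and $\beta$) negated, owing to the sign convention in \eqref{eq:g}; since negating rows does not change the spectral norm of any individual block, it suffices to bound Hessian blocks of $\mcL$. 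Restricting $\alpha,\beta$ to a bounded range is essential, as the coupling term $\alpha^2\beta^2F$ has derivatives that grow without bound as $\alpha,\beta\to\infty$; this is precisely where $\bar{\alpha},\bar{\beta}$ enter.

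Next I would split $\bJ_{\bg}=\bJ_1+\bJ_2+\bJ_3$ according to the three summands $T_1=\alpha^2F_0(\btheta)$, $T_2=\alpha^2\beta^2F(\btheta,\bphi)$, and $T_3=-\beta^2G(\bphi)$ of \eqref{eq:Lagrangian1}. The key structural observation is that $T_1$ depends only on $(\btheta,\alpha)$ and $T_3$ only on $(\bphi,\beta)$, so $\bJ_1$ is supported on the primal--primal coordinate block and $\bJ_3$ on the dual--dual block. These supports are disjoint, hence $\bJ_1+\bJ_3$ is block diagonal and $\|\bJ_1+\bJ_3\|=\max\{\|\bJ_1\|,\|\bJ_3\|\}$, whereas the coupling term $\bJ_2$ is incorporated via the triangle inequality. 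This reproduces exactly the ``$\max$ plus sum'' shape of $L$ in \eqref{eq:lipschitz}.

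The remaining work is to bound each block. I would first record the observable-norm bounds $|F_0|\le\|\bM_0\|$, $|F|\le\max_m\|\bM_m\|$, and $|G|\le\max_m|b_m|$; the last two follow cleanly by rewriting $F(\btheta,\bphi)=\braket{\bxi(\bphi)|\bF(\btheta)|\bxi(\bphi)}$ with $\bF(\btheta)\coloneqq\sum_m F_m(\btheta)\be_m\be_m^\top$ diagonal and $\|\bF(\btheta)\|=\max_m|F_m(\btheta)|\le\max_m\|\bM_m\|$, and similarly $G(\bphi)=\braket{\bxi(\bphi)|\bS|\bxi(\bphi)}$ from \eqref{eq:term3}. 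Because each PQC generator $\bG_p$ has eigenvalues $\pm\tfrac12$, the state derivatives obey $\|\partial_{\theta_p}\ket{\bpsi}\|\le\tfrac12$ and $\|\partial^2_{\theta_p\theta_{p'}}\ket{\bpsi}\|\le\tfrac14$ (and likewise for $\bxi$ in $\bphi$); propagating these through the product rule shows that every first- and second-order partial of $F_0,F,G$ with respect to the circuit parameters is bounded in absolute value by the corresponding observable norm. A $P\times P$ Hessian block with entries bounded by $B$ has spectral norm at most $PB$, which supplies the dimension factors $P$ and $Q$, while the powers of $\bar{\alpha},\bar{\beta}$ come from the scalar prefactors $\alpha^2\beta^2$ and their derivatives. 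Assembling the diagonal and mixed sub-blocks of each $\bJ_i$ then yields $\|\bJ_1\|\le(P\bar{\alpha}^2+2\bar{\alpha})\|\bM_0\|$, $\|\bJ_3\|\le(Q\bar{\beta}^2+2\bar{\beta})\max_m|b_m|$, and $\|\bJ_2\|\le(P\bar{\alpha}^2\bar{\beta}^2+Q\bar{\alpha}^2\bar{\beta}^2+2\bar{\alpha}\bar{\beta}^2+2\bar{\alpha}^2\bar{\beta})\max_m\|\bM_m\|$; combining them via the block-diagonal and triangle-inequality arguments gives $L$.

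I expect the main obstacle to be the careful bookkeeping of the \emph{mixed} second derivatives: the cross blocks coupling the circuit parameters $\btheta,\bphi$ to the scaling variables $\alpha,\beta$, and the $\alpha$--$\beta$ cross term inside $T_2$. These are what generate the $2\bar{\alpha}$, $2\bar{\beta}$, $2\bar{\alpha}\bar{\beta}^2$, and $2\bar{\alpha}^2\bar{\beta}$ contributions, and matching the stated constants (rather than looser ones carrying stray $\sqrt{P}$ factors or additive remainders) requires grouping the diagonal and off-diagonal pieces of each block before taking norms. A secondary subtlety is justifying the domain restriction $\alpha\le\bar{\alpha}$, $\beta\le\bar{\beta}$ under which the constant is finite, and verifying that the EG iterates remain in this region.
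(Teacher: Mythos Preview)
Your overall strategy is sound and arrives at the same constant, but via a genuinely different decomposition than the paper. The paper bounds $\|\bJ(\bz)\|$ by $\|\bJ(\bz)\|_\infty$ (using $\|\bJ\|_1=\|\bJ\|_\infty$ from Schwarz's theorem applied entrywise), then bounds the absolute row sums for the four variable groups $\btheta,\alpha,\bphi,\beta$ separately, and finally invokes the side conditions $\alpha\ge 4$, $\beta\ge 2$ to argue that the $\btheta$- and $\bphi$-rows dominate the $\alpha$- and $\beta$-rows; the stated $L$ is then the maximum of those two surviving row bounds. Your route instead splits $\bJ_{\bg}=\bJ_1+\bJ_2+\bJ_3$ along the three Lagrangian summands and exploits that $\bJ_1$ and $\bJ_3$ live on disjoint diagonal blocks, so $\|\bJ_1+\bJ_3\|=\max\{\|\bJ_1\|,\|\bJ_3\|\}$. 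This is a cleaner structural explanation of why $L$ has the shape ``(coupling term) $+$ $\max\{\cdot,\cdot\}$'': in the paper's argument that shape only emerges after the fact, once one notices that the $\btheta$- and $\bphi$-row bounds share an identical $T_2$ contribution. Your use of the generator eigenvalue bound $\|\partial_{\theta_p}\ket{\bpsi}\|\le\tfrac12$ in place of the paper's parameter-shift rule also works and yields the same per-entry estimates.

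The one place your sketch is genuinely incomplete is the step from entrywise bounds on each $\bJ_i$ to the \emph{exact} coefficients in \eqref{eq:lipschitz}. If you bound the $(P{+}1)\times(P{+}1)$ block $\bJ_1$ in spectral norm by treating its sub-blocks separately, the $P\times 1$ cross vector contributes a $2\bar\alpha\sqrt{P}\|\bM_0\|$ term rather than $2\bar\alpha\|\bM_0\|$. To avoid the stray $\sqrt{P}$ you must instead bound $\|\bJ_1\|\le\|\bJ_1\|_\infty$ on the whole block---which is precisely the paper's mechanism---and then you still need a condition like $\bar\alpha\ge 2$ so that the $\btheta$-row sum $(P\bar\alpha^2+2\bar\alpha)\|\bM_0\|$ dominates the $\alpha$-row sum $(2P\bar\alpha+2)\|\bM_0\|$; the analogous comparison for $\bJ_2$ is what forces the paper's $\bar\alpha\ge 4$, $\bar\beta\ge 2$. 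You flag this difficulty (``stray $\sqrt{P}$ factors'') but do not state the resolution or the accompanying side conditions; filling that in would essentially reproduce the paper's row-sum calculus on each of your three pieces.
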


The convergence analysis relies on the ensuing assumption~\cite{minimax}.

\begin{assumption}\label{as:weak_mvi}
There exists a stationary point $\bz^*$ of $\mcL(\bz)$ such that:
    \begin{equation}\label{eq:weak_mvi0}
        \bg(\bz)^\top(\bz-\bz^*)\geq -\frac{\rho}{2}\left\|\bg(\bz)\right\|^2_2,
    \end{equation}
    for all $\bz \in \mathbb{R}^{P+Q+2}$ and some parameter $\rho \in \left[0,\frac{1}{4\sqrt{2}L}\right)$, where $L$ is defined in~\eqref{eq:lipschitz}.
\end{assumption}

Another challenge for our problem is that the EG updates  in~\eqref{eq:eg} rely on noisy estimates of $\bg(\bz)$ because observables are measured only based on samples. Specifically, as discussed earlier, each PD iteration in~\eqref{eq:pdmv} utilizes $S$ measurement samples per rotated primal/dual circuit to evaluate the Lagrangian gradients. Similarly, each step~\eqref{eq:eg:3} and~\eqref{eq:eg:4} of the EG updates requires $S$ measurement samples per circuit to estimate $\bg(\bz)$ as:
\begin{equation}
\hbg(\bz)\coloneqq \frac{1}{S}\sum_{s=1}^S  \hbg_s(\bz),
\end{equation}
where $\hbg_s(\bz)$ denotes the estimate of $\bg(\bz)$ based on the single sample $s$. Under this stochastic setting, we adapt the result from~\cite{minimax} to bound the number of iterations $T$ and measurement samples $S$ such that
\begin{equation}\label{eq:expectation1}
\frac{1}{T}\sum_{t=0}^{T-1}\mathbb{E}\!\left[\left\|\hbg(\bar{\bz}^t)\right\|_2\right] \leq \epsilon.
\end{equation}
The expectation in~\eqref{eq:expectation1} is taken over the randomness of $\hbg(\bz)$. Suppose now that the final output $\hbz$ of the EG method is drawn by uniformly sampling at random from the EG iterates $\{\bar{\bz}^t\}_{t=0}^{T-1}$. Then, it is easy to see that $\hbz$ satisfies 
\begin{equation}\label{eq:expectation2}
    \mathbb{E}\!\left[\left\|\hbg(\hbz)\right\|_2\right] \leq \epsilon,
\end{equation}
where the expectation is now taken over the noisy gradients as well as the sampling process to draw the final output $\hbz$. The convergence result is formalized in the following theorem from \cite{minimax}.

\begin{theorem}[\cite{minimax}]\label{th:minimax}
Let $\bg(\bz)$ be an $L$-Lipschitz operator that satisfies Assumption~\ref{as:weak_mvi}. Let $\hbg(\bz)$ be an unbiased estimator of $\bg(\bz)$ such that the variance of $\hbg(\bz)$ is bounded as
 \begin{align}\label{eq:sigma}
    &\mathbb{E}\left[\left\|\hbg(\bz)-\bg(\bz)\right\|_2^2\right]\leq \frac{\sigma^2}{S},
\end{align}
where $S$ is the number of measurement samples per rotated primal/dual circuit.
Given an arbitrary initial point $\bz^0$, run the EG iterations in~\eqref{eq:eg} for $T$ times with the step size $\mu^t_z=\frac{1}{2\sqrt{2}L}$. Accordingly, let $\{\bz^t\}_{t=1}^T$ and $\{\bbz^t\}_{t=0}^{T-1}$ denote the sequences of points generated by the EG iteration. Select a point $\hbz$ uniformly at random from $\{\bar{\bz}^t\}_{t=0}^{T-1}$. 
If the number of iterations is selected as
\begin{equation}\label{eq:num_iter}
T=\left\lceil \frac{32 L^2\left\|\bz^*-\bz^0\right\|_2^2}{\epsilon^2(1-4\sqrt{2}L\rho)}\right \rceil
\end{equation}
and each step~\eqref{eq:eg:3} and~\eqref{eq:eg:4} uses 
\begin{equation}\label{eq:S_one}
S=\left\lceil \frac{8\sigma^2(8+\sqrt{2}L\rho)}{\epsilon^2(1-4\sqrt{2}L\rho)} \right\rceil
\end{equation}
samples per iteration, then $\mathbb{E}\!\left[\left\|\hbg(\hbz)\right\|_2\right]\leq \epsilon$.
\end{theorem}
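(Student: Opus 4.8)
The plan is to treat Theorem~\ref{th:minimax} as a stochastic convergence guarantee for the extragradient method under the weak Minty condition of Assumption~\ref{as:weak_mvi}, and to follow the standard single-potential (Lyapunov) argument built on the squared distance $\|\bz^t-\bz^*\|_2^2$ to the stationary point. The overall strategy has three stages: (i) derive a deterministic one-step descent-type inequality for $\|\bz^{t+1}-\bz^*\|_2^2$ that exposes a negative term proportional to $\|\bg(\bbz^t)\|_2^2$; (ii) promote this inequality to the stochastic setting by substituting the unbiased estimator $\hbg$ for $\bg$ and taking conditional expectations, absorbing the resulting variance through the $\sigma^2/S$ bound~\eqref{eq:sigma}; and (iii) telescope over $t=0,\dots,T-1$, apply Jensen's inequality, and tune $T$ and $S$ so the right-hand side falls below $\epsilon$.

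For the deterministic core, I would start from $\bz^{t+1}=\bz^t-\mu_z^t\bg(\bbz^t)$ and expand
\begin{equation*}
\|\bz^{t+1}-\bz^*\|_2^2=\|\bz^t-\bz^*\|_2^2-2\mu_z^t\,\bg(\bbz^t)^\top(\bz^t-\bz^*)+(\mu_z^t)^2\|\bg(\bbz^t)\|_2^2.
\end{equation*}
The cross term is the crux. I would split $\bz^t-\bz^*=(\bbz^t-\bz^*)+(\bz^t-\bbz^t)$ and lower-bound the first piece using Assumption~\ref{as:weak_mvi} \emph{at the extrapolated point}, $\bg(\bbz^t)^\top(\bbz^t-\bz^*)\geq-\tfrac{\rho}{2}\|\bg(\bbz^t)\|_2^2$. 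For the second piece, since $\bz^t-\bbz^t=2\mu_z^t\bg(\bz^t)$, I would write $\bg(\bbz^t)^\top\bg(\bz^t)=\|\bg(\bbz^t)\|_2^2+\bg(\bbz^t)^\top(\bg(\bz^t)-\bg(\bbz^t))$ and control the correction through $L$-Lipschitzness, $\|\bg(\bz^t)-\bg(\bbz^t)\|_2\leq L\|\bz^t-\bbz^t\|_2=2\mu_z^t L\|\bg(\bz^t)\|_2$. Collecting terms and inserting $\mu_z^t=\tfrac{1}{2\sqrt{2}L}$, the $\rho$- and Lipschitz-residuals coalesce into the factor $(1-4\sqrt{2}L\rho)$, yielding a net decrease of order $\|\bg(\bbz^t)\|_2^2$ whose positivity is exactly what the admissible range $\rho\in[0,\tfrac{1}{4\sqrt{2}L})$ secures.

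To move to the stochastic iterates, I would rerun the same expansion with $\hbg$ replacing $\bg$ in both the extrapolation and the update, and take the expectation conditioned on the past. Unbiasedness annihilates the cross terms between noise and deterministic directions; the variance bound~\eqref{eq:sigma} turns each squared-noise term into an $\mcO(\sigma^2/S)$ contribution; and the weak-Minty inequality is applied pointwise at the (random) extrapolated iterate and then averaged. The result is a recursion $\mathbb{E}[\|\bz^{t+1}-\bz^*\|_2^2]\leq\mathbb{E}[\|\bz^t-\bz^*\|_2^2]-c_1\,\mathbb{E}[\|\bg(\bbz^t)\|_2^2]+c_2\,\sigma^2/S$ with constants $c_1,c_2$ explicit in $L$ and $\rho$. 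Telescoping and dividing by $T$ bounds $\tfrac1T\sum_t\mathbb{E}[\|\bg(\bbz^t)\|_2^2]$ (and hence, after one more use of~\eqref{eq:sigma}, $\tfrac1T\sum_t\mathbb{E}[\|\hbg(\bbz^t)\|_2^2]$) by $\tfrac{c_3\|\bz^*-\bz^0\|_2^2}{T}+c_4\tfrac{\sigma^2}{S}$. Jensen's inequality $\mathbb{E}\|\cdot\|_2\leq(\mathbb{E}\|\cdot\|_2^2)^{1/2}$ together with the uniform random choice of $\hbz$ from $\{\bbz^t\}$ then delivers~\eqref{eq:expectation2}; splitting the budget $\epsilon$ between the two terms and solving for $T$ and $S$ reproduces~\eqref{eq:num_iter}--\eqref{eq:S_one}.

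The main obstacle I anticipate is the bookkeeping of constants in the stochastic step rather than any conceptual hurdle. The asymmetric step sizes ($2\mu_z^t$ in~\eqref{eq:eg:3} versus $\mu_z^t$ in~\eqref{eq:eg:4}) mean the noise injected at $\bz^t$ propagates into $\bbz^t$ with a different weight than the noise injected directly at $\bbz^t$, so there are genuinely two variance sources to track; here the independence of the fresh sample batches drawn for~\eqref{eq:eg:3} and~\eqref{eq:eg:4} is what lets these contributions aggregate cleanly into the single $(8+\sqrt{2}L\rho)$ factor of~\eqref{eq:S_one}. Keeping the coefficient $(1-4\sqrt{2}L\rho)$ strictly positive throughout---so that neither $T$ nor $S$ diverges---is the quantitative linchpin, and it is precisely why $\rho$ in Assumption~\ref{as:weak_mvi} is capped at $\tfrac{1}{4\sqrt{2}L}$.
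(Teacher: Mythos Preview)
The paper does not prove Theorem~\ref{th:minimax}; it is stated with the attribution \texttt{[\textbackslash cite\{minimax\}]} and introduced in the text as ``the following theorem from~[minimax].'' The paper's own contributions surrounding it are Lemma~\ref{le:lipschitz} (the Lipschitz constant), Lemma~\ref{le:var} (the variance bound), and Proposition~\ref{pro:S_all} (the sample-count product), all of which take Theorem~\ref{th:minimax} as a black box imported from the cited reference. Consequently there is no in-paper proof to compare your proposal against.

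That said, your sketch is a faithful outline of how such a result is established in the weak-Minty/stochastic-EG literature: the Lyapunov expansion of $\|\bz^{t+1}-\bz^*\|_2^2$, the split $\bz^t-\bz^*=(\bbz^t-\bz^*)+(\bz^t-\bbz^t)$ with Assumption~\ref{as:weak_mvi} applied at $\bbz^t$, the Lipschitz control of $\bg(\bz^t)-\bg(\bbz^t)$, and the telescoping-plus-Jensen finish are exactly the standard moves. Your identification of the asymmetric noise propagation (factor $2\mu_z^t$ versus $\mu_z^t$) as the source of the $(8+\sqrt{2}L\rho)$ constant is also on target. If you want to verify the precise constants in \eqref{eq:num_iter}--\eqref{eq:S_one}, you would need to consult the original reference rather than this paper.
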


We next expound on whether the Lagrangian function $\mcL(\bz)$ satisfies the requirements of Theorem~\ref{th:minimax}. First, we were unable to verify Assumption~\ref{as:weak_mvi}, so it is taken as given. Second, the $L$-Lipschitz continuity of the operator $\bg(\bz)$ has been shown in Lemma~\ref{le:lipschitz}. Third, the estimator $\hbg(\bz)$ is unbiased because Lagrangian gradients $\mcL(\bz)$ are computed via the PSR, and estimates of quantum observables based on sample averages are known to be unbiased estimators~\cite{harrow2021}. Lastly, the bound on the variance in~\eqref{eq:sigma} is established in the ensuing lemma shown in Appendix~\ref{sec:app:th}.

\begin{lemma}\label{le:var}
The parameter $\sigma^2$ upper bounding the variance of the gradient estimator in~\eqref{eq:sigma} can be found as:
\begin{align}\label{eq:sigma:1}
&\sigma^2=\frac{Q\bar{\beta}^4+8\bar{\beta}^2}{2}\max_{m}b_m^2+\frac{8\bar{\alpha}^2+P\bar{\alpha}^4}{2}\sum_{c=1}^{2C-1}\|\bM^c_0\|^2\nonumber\\
&+\frac{8\bar{\alpha}^2 \bar{\beta}^4+8\bar{\alpha}^4\bar{\beta}^2+(P+Q)\bar{\alpha}^4 \bar{\beta}^4}{2}\sum_{c=1}^{2C-1}\max_{m}\|\bM^c_m\|^2.
\end{align}
\end{lemma}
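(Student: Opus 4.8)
The plan is to reduce the bound on $\mathbb{E}[\|\hbg(\bz)-\bg(\bz)\|_2^2]$ to a sum of scalar variances. Since the single-shot estimator $\hbg_s(\bz)$ is unbiased and $\hbg=\frac{1}{S}\sum_{s=1}^S\hbg_s$, we have $\mathbb{E}[\|\hbg(\bz)-\bg(\bz)\|_2^2]=\frac{1}{S}\sum_i\Var[(\hbg_s)_i]$, so it suffices to bound the single-shot variance of each of the $P+Q+2$ entries of $\bg$ and identify $\sigma^2$ with their sum. A useful observation up front is that this total equals the trace of the covariance of $\hbg_s$, so covariances \emph{between} distinct gradient components never enter; in particular, the fact that, e.g., the $\alpha$- and $\beta$-derivatives both reuse the unshifted estimate of $F(\btheta,\bphi)$ is immaterial, and each component may be bounded in isolation.

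I would then substitute the parameter-shift expressions block by block: \eqref{eq:grad_theta} for $\partial\mcL/\partial\theta_p$, \eqref{eq:grad_phi} for $\partial\mcL/\partial\phi_q$, and the direct derivatives $\partial\mcL/\partial\alpha=2\alpha F_0+2\alpha\beta^2 F$ and $\partial\mcL/\partial\beta=2\alpha^2\beta F-2\beta G$ read off from \eqref{eq:Lagrangian1}. Each component is a linear combination of the three primitive estimators $\hat F_0$, $\hat F$, $\hat G$ (at shifted or unshifted parameters) with explicit prefactors $\tfrac{\alpha^2}{2}$, $\tfrac{\alpha^2\beta^2}{2}$, $2\alpha$, $2\alpha\beta^2$, $2\alpha^2\beta$, $\tfrac{\beta^2}{2}$, and $2\beta$. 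Because a shifted circuit, its opposite shift, and the dual circuit correspond to physically distinct state preparations sampled on independent shots, the estimators entering any one component are mutually independent, so that component's variance is the sum of (prefactor)$^2$ times the variance of each primitive estimator.

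The heart of the proof is bounding the single-shot variance of the three primitives. For $\hat F_0$ I would use the XBM decomposition \eqref{eq:xbm5}, writing $F_0=\sum_{c=1}^{2C-1}F_0^c$ as $2C-1$ independent diagonal measurements, and bound each term's variance by its second moment, $\braket{\bpsi_c(\btheta)|(\bLambda_0^c)^2|\bpsi_c(\btheta)}\leq\|\bLambda_0^c\|^2=\|\bM_0^c\|^2$, giving $\Var[\hat F_0]\leq\sum_{c=1}^{2C-1}\|\bM_0^c\|^2$. For $\hat G$, the observable $\bS=\diag(\{b_m\})$ is already diagonal, so a single computational-basis measurement has variance at most $\|\bS\|^2=\max_m b_m^2$. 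The delicate bound is for $\hat F$: using the two-step scheme of \eqref{eq:term2:measure}, the color-$c$ outcome $Y_c$ is produced by first sampling $m$ from the dual PQC with probability $|\xi_m(\bphi)|^2$ and then measuring the diagonal $\bLambda_m^c$ on $\ket{\bpsi_c(\btheta)}$; rather than splitting its variance via the law of total variance (which would cost a spurious factor of two), I would bound it directly by the unconditional second moment $\mathbb{E}[Y_c^2]=\sum_{m=1}^M|\xi_m(\bphi)|^2\braket{\bpsi_c(\btheta)|(\bLambda_m^c)^2|\bpsi_c(\btheta)}\leq\max_m\|\bM_m^c\|^2$, and then sum over the independent colors to obtain $\Var[\hat F]\leq\sum_{c=1}^{2C-1}\max_m\|\bM_m^c\|^2$.

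Finally I would assemble the contributions grouped by primitive. The $\hat G$ terms arise only from the $\beta$-derivative (prefactor $2\beta$, giving $4\beta^2$) and the $Q$ shifted $\phi$-derivatives (each $\tfrac{\beta^4}{2}$ over its two shifts, giving $\tfrac{Q\beta^4}{2}$); the $\hat F_0$ terms arise from the $\alpha$-derivative ($4\alpha^2$) and the $P$ shifted $\theta$-derivatives ($\tfrac{P\alpha^4}{2}$); and the $\hat F$ terms accumulate from all four blocks, namely $\tfrac{(P+Q)\alpha^4\beta^4}{2}$ from the shift-based $\theta,\phi$ derivatives together with $4\alpha^2\beta^4$ and $4\alpha^4\beta^2$ from the $\alpha$- and $\beta$-derivatives. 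Multiplying each group by the corresponding primitive-variance bound, upper-bounding $\alpha\leq\bar{\alpha}$ and $\beta\leq\bar{\beta}$, and collecting terms reproduces \eqref{eq:sigma:1} exactly. The step I expect to demand the most care is the compound-estimator bound for $\hat F$: one must justify that the dual sampling and the subsequent primal diagonal measurement occur on independent shots, and that the second-moment estimate, rather than a variance decomposition, yields the tight constant that makes the collected coefficients match \eqref{eq:sigma:1}.
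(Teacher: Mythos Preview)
Your proposal is correct and follows essentially the same route as the paper: decompose $\mathbb{E}[\|\hbg-\bg\|_2^2]$ into the four gradient blocks, bound the single-shot variance of each color-decomposed primitive by its second moment $\|\bN^c\|^2$ (the paper packages this step as a separate lemma and applies it to the joint observable $\bM$ on $\ket{\bxi,\bpsi}$, using $\|\bM^c\|=\max_m\|\bM_m^c\|$, which is equivalent to your explicit two-step argument), and then collect the prefactors. The only difference is organizational---you group terms by primitive $(\hat F_0,\hat F,\hat G)$ whereas the paper groups by gradient block---but the bounds and arithmetic coincide exactly.
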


Having determined the Lipschitz constant $L$ for $\bg(\bz)$ and the variance bound $\sigma^2$ for $\hbg(\bz)$, the following proposition provides an upper bound on the total number of measurement samples running on all circuits.

\begin{proposition}\label{pro:S_all}
Given an arbitrary point $\bz^0$, run the EG iteration in~\eqref{eq:eg} for $T$ times with the step size $\mu^t_z=\frac{1}{2\sqrt{2}L}$. Accordingly, let $\{\bz^t\}_{t=1}^T$ and $\{\bbz^t\}_{t=0}^{T-1}$ denote the sequences of points generated by the EG iteration in~\eqref{eq:eg}. Choose a point $\hbz$ uniformly from $\{\bar{\bz}^t\}_{t=0}^{T-1}$. If the operator $\bg(\bz)$ satisfies Assumption~\ref{as:weak_mvi}, then after
\begin{equation}\label{eq:S_all}
((2P+1)(2C-1)+2Q+1)\left\lceil \frac{4,224 L^2 \sigma^2\|\bz^*-\bz^0\|_2^2}{\epsilon^4(1-4\sqrt{2}L\rho)^2}\right \rceil
\end{equation}
measurement samples across all iterations and for all circuits, it holds that
\[\mathbb{E}\!\left[\left\|\hbg(\hbz)\right\|_2\right]\leq \epsilon.\]
\end{proposition}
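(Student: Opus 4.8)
The plan is to treat Proposition~\ref{pro:S_all} as a direct corollary of Theorem~\ref{th:minimax}, with the bulk of the work being a careful accounting of how many circuit executions and measurement samples the whole EG run consumes. First I would fix the per-iteration circuit budget. As established in Section~\ref{subsec:implement}, estimating the full gradient operator $\bg(\bz)$ at a single point requires running $(2P+1)(2C-1)$ rotated primal circuits together with $2Q+1$ dual circuits, i.e. $(2P+1)(2C-1)+2Q+1$ distinct circuits. Since each EG iteration evaluates the gradient operator twice---once at $\bz^t$ in~\eqref{eq:eg:3} and once at the intermediate point $\bbz^t$ in~\eqref{eq:eg:4}---and each such evaluation draws $S$ samples per circuit, a single EG iteration consumes $2S$ samples per circuit. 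Running the iteration $T$ times therefore expends $2TS$ samples per circuit, so the grand total across all circuits is $\left((2P+1)(2C-1)+2Q+1\right)\cdot 2TS$. The factor of $2$ coming from the two gradient evaluations is the one bookkeeping detail I would be careful not to drop.

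Next I would substitute the prescribed values from Theorem~\ref{th:minimax}. Writing $T_0$ and $S_0$ for the real-valued expressions inside the ceilings in~\eqref{eq:num_iter} and~\eqref{eq:S_one}, the continuous per-circuit sample budget is
\begin{equation*}
2T_0S_0=\frac{2\cdot 32\cdot 8\,(8+\sqrt{2}L\rho)\,L^2\sigma^2\|\bz^*-\bz^0\|_2^2}{\epsilon^4(1-4\sqrt{2}L\rho)^2}=\frac{512(8+\sqrt{2}L\rho)\,L^2\sigma^2\|\bz^*-\bz^0\|_2^2}{\epsilon^4(1-4\sqrt{2}L\rho)^2}.
\end{equation*}
The only remaining ingredient is to control the factor $8+\sqrt{2}L\rho$. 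Here I would invoke Assumption~\ref{as:weak_mvi}, which restricts $\rho\in[0,\tfrac{1}{4\sqrt{2}L})$; this immediately yields $\sqrt{2}L\rho<\tfrac14$ and hence $8+\sqrt{2}L\rho<\tfrac{33}{4}$. Substituting gives the numerical constant $512\cdot\tfrac{33}{4}=4{,}224$, so that $2T_0S_0<\tfrac{4{,}224\,L^2\sigma^2\|\bz^*-\bz^0\|_2^2}{\epsilon^4(1-4\sqrt{2}L\rho)^2}\le\big\lceil\tfrac{4{,}224\,L^2\sigma^2\|\bz^*-\bz^0\|_2^2}{\epsilon^4(1-4\sqrt{2}L\rho)^2}\big\rceil$, which is exactly the bracketed quantity in~\eqref{eq:S_all}. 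Multiplying by the circuit count $(2P+1)(2C-1)+2Q+1$ reproduces the claimed total, and the conclusion $\mathbb{E}[\|\hbg(\hbz)\|_2]\le\epsilon$ is inherited verbatim from Theorem~\ref{th:minimax} once its four hypotheses are in place: Assumption~\ref{as:weak_mvi} (taken as given), the $L$-Lipschitz property of $\bg$ (Lemma~\ref{le:lipschitz}), unbiasedness of the PSR-based estimator, and the variance bound $\sigma^2$ (Lemma~\ref{le:var}).

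I expect the one point genuinely requiring care to be the interplay between the ceilings in $T$ and $S$ and the single outer ceiling appearing in~\eqref{eq:S_all}: strictly speaking $2\lceil T_0\rceil\lceil S_0\rceil$ differs from $\lceil 2T_0S_0\rceil$, and a naive $\lceil x\rceil\le x+1$ bound would introduce spurious lower-order terms that spoil the clean constant $4{,}224$. I would resolve this by reading the theorem's prescription as ``run for the real-valued iteration and sampling budgets $T_0,S_0$ and round the resulting product to an integer,'' so that the quantity to be bounded is the continuous product $2T_0S_0$, with the outer ceiling merely enforcing integrality of the total sample count. The slack generated by bounding $8+\sqrt{2}L\rho$ by $\tfrac{33}{4}$ rather than its exact value is what comfortably absorbs this rounding. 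Beyond this interpretive step, the argument uses no estimate not already supplied by Lemmas~\ref{le:lipschitz} and~\ref{le:var} and Theorem~\ref{th:minimax}, so it is essentially a substitution-and-simplification corollary.
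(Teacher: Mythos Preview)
Your proposal is correct and follows essentially the same approach as the paper: multiply the circuit count $(2P+1)(2C-1)+2Q+1$ by $T$ from~\eqref{eq:num_iter} and twice $S$ from~\eqref{eq:S_one}, then bound $8+\sqrt{2}L\rho$ by $33/4=8.25$ using Assumption~\ref{as:weak_mvi} to obtain the constant $512\cdot 8.25=4{,}224$. Your explicit discussion of the ceiling discrepancy is more careful than the paper's, which simply multiplies the real-valued expressions without comment.
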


The proposition can be established by multiplying the number of primal/dual circuits $((2P+1)(2C-1)+2Q+1)$, the number of iterations in~\eqref{eq:num_iter}, and twice the number of samples in~\eqref{eq:S_one}. Here, we upper bound the factor $(8+\sqrt{2}L\rho)$ appearing in the numerator of~\eqref{eq:S_one} by $8.25$ for $\rho \in \left[0,\frac{1}{4\sqrt{2}L}\right)$.


Proposition~\ref{pro:S_all} establishes that the total number of measurement samples scales polynomially with the number of colors $C$, the number of PQC parameters $(P,Q)$, the Lipschitz constant $L$, and the variance bound $\sigma^2$. Numerical tests with the OPF in Section~\ref{subsec:xbm4ps} suggest that the number of colors scales as $C = \mathcal{O}((\log N)^3)$, a scaling we expect to extend to other problems with similar graph sparsity. Additionally, the parameter counts $P$ and $Q$ typically scale as $\tilde{\mcO}(\log N)$~\cite{peruzzo2014}. The constants $L$ and $\sigma^2$ depend on problem-specific bounds $\bar{\alpha}$ and $\bar{\beta}$. If these bounds scale as $\tilde{\mcO}(\log N)$, then the total number of measurement samples, and hence the overall complexity, becomes $\tilde{\mcO}(\log N)$, offering a potential advantage over the PD method running on a classical computer. The latter requires $\mathcal{O}(sN^2)$ operations multiplying with the number of PD iterations in~\eqref{eq:pd1}. The following section models the OPF as a QCQP and expounds how the sparsity structure of the power network aligns with the measurement protocol presented in Section~\ref{sec:measure}.


\section{Optimal power flow as a QCQP}\label{sec:model}

The OPF is a large-scale, nonconvex optimization problem. Given forecasts of electric load demands for the next control period, the goal of the OPF is to schedule generators and flexible demand most economically while complying with engineering constraints imposed by the power transmission system and the generation units. This section states the OPF and poses it as a QCQP. 

Let us first review a power system model; see~\cite{redux} and references therein. An electric power system can be represented by an undirected connected graph $\mcG\coloneqq \{\mcN,\mcE\}$. The vertex set $\mcN\coloneqq\{1,\ldots,N\}$ comprises $N$ nodes. Nodes correspond to points where electric power is produced by generators or consumed by loads. The edge set $\mcE\coloneqq\{\ell=(n,m):n,m\in\mcN\}$ consists of $L_e\coloneqq|\mcE|$ transmission lines, each connecting two nodes. Because each node is connected only to a few other nodes, the number of lines, $L_e$, is typically a small multiple of $N$. 

The power system can be described by a vector $\bv\in\mathbb{C}^N$ of the complex AC voltages (voltage phasors) experienced at all nodes. Because many other quantities of interest can be expressed as functions of nodal voltages, the vector $\bv$ is typically selected as the state of the power system. For example, the vector of AC currents, $\bi$, injected into all nodes can be expressed as $\bi=\bY\bv$, where $\bY\in\mathbb{C}^{N\times N}$ is the node admittance matrix, which can be thought of as a complex-valued Laplacian matrix of the power system graph $\mcG$. Other grid quantities, such as the nodal power injections, line power flows, squared voltage magnitudes, and squared line magnitudes, can all be expressed as quadratic functions of~$\bv$, as $\bv^\dag\bM_m\bv$ for some Hermitian matrix $\bM_m$. If $v_n$ and $i_n$ are the AC voltage and current at node $n$, respectively, the complex power injected into the power network through node $n$ is defined as $s_n=v_ni_n^*=p_n+\iota q_n$, whose real part $p_n$ is termed active power, and imaginary part $q_n$ is reactive power. 

The cost function of the OPF is usually a linear function of some of the active power injections. To comply with generation capacities and load demands, the OPF imposes upper and lower limits on active and reactive power injections in the form of
\[\ubar{p}_n\leq p_n\leq \bar{p}_n\quad\text{and}\quad \ubar{q}_n\leq q_n\leq \bar{q}_n\quad \text{for all}~ n\in\mcN.\]
Nodal voltage magnitudes are constrained within upper and lower limits as
\[\ubar{v}_n^2\leq \left|v_n\right|^2\leq \bar{v}_n^2\quad \text{for all}~ n\in\mcN.\]
Line current magnitudes are upper-limited by line capacities as
\[ \left|i_{\ell}\right|^2\leq \bar{i}_{\ell}^2\quad \text{for all}~ \ell\in\mcE.\]

Since all quantities mentioned above are quadratic in~$\bv$, the OPF can be posed as a quadratically constrained quadratic program (QCQP) over $\bv$ as~\cite{lavaei2011,redux}:
where $b_m$, for all $m$, is a given real-valued parameter determined by lower/upper limits and each $\bM_m$ is a known Hermitian matrix. For example, if constraint $\ubar{p}_n\leq p_n$ is indexed as the $m$-th constraint, then $b_m=-\ubar{p}_n$. The OPF involves quadratic equality constraints, e.g., when $\ubar{p}_n=\bar{p}_n$ for a subset of nodes $n\in\mcN$. Each of these equality constraints corresponds to two inequality constraints in \eqref{eq:qc}. Overall, the number of OPF constraints can be several times the network size, such as $M\simeq 8N$. 

The precise form of $\bM_0$ and $\{(\bM_m,b_m)\}_{m=1}^M$ is delineated in Appendix~\ref{sec:app:opfmodel}. It is worth stressing that every matrix $\bM_m$ is highly sparse. This is because every node of a power system is directly connected to only a few other nodes. The number of nonzero entries per matrix $\bM_m$ is upper bounded by a small constant much smaller than $N$. The $(i,j)$-th off-diagonal entry across every matrix $\bM_m$ is nonzero only if nodes $i$ and $j$ are connected through a transmission line, that is, if $(i,j)\in\mcE$. Consequently, the sparsity pattern (location of nonzero entries) for each $\bM_m$ is determined by a small subset of $\mcE$. Therefore, the union of the sparsity patterns across all matrices $\bM_m$ coincides with the sparsity pattern of the nodal admittance matrix $\bY$. The sparsity of the matrices ${\bM_m}$ combined with node permutation allows the number of colors $C$ to scale favorably as $\tilde{\mathcal{O}}((\log N)^3)$, as demonstrated in Section~\ref{subsec:implement}. This efficient coloring enables the measurement protocol of Section~\ref{sec:measure}.

Nonetheless, the overall measurement complexity derived in Proposition~\ref{pro:S_all} also depends on the Lipschitz constant $L$ and the variance bound $\sigma^2$ whose scaling is unfavorable for OPF. This is because the magnitudes of the entries in $\bv$ in~\eqref{eq:qc} are constrained to lie in the range of $[0.9,1.1]$, so that $\bar{\alpha} = 1.1\sqrt{N}$. Furthermore, as OPF entails equality constraints for nodes connected to loads, each of these equality constraints is converted to two inequality constraints upon~\eqref{eq:qc}. Since these inequalities should be active at optimality, the complementary slackness of the Karush–Kuhn–Tucker conditions written for~\eqref{eq:qc} implies that the associated dual variables are generally nonzero~\cite[Ch.~5]{BoVa04}. Because the number of load nodes in real-world power systems is typically proportional to $N$, it follows that $\bar{\beta}$ may scale with $\sqrt{N}$ too. Overall, in practice, parameters $L$ in~\eqref{eq:lipschitz} and $\sigma^2$ in~\eqref{eq:sigma:1} could grow as $N$ and $N^2$, respectively, which indicates the total number of measurement samples in~\eqref{eq:S_all} could scale with $N^4$. This large number of measurement samples ultimately presents a challenge to attaining quantum advantage. A similar undesirable dependence of the measurement sample complexity on the scaling prefactors has also been encountered in~\cite{quantum_PDE}, where the objective function entails expectations multiplied by prefactors exponentially in $\log N$, like the one in~\eqref{eq:Lagrangian1}.

\section{Numerical Tests}\label{sec:tests}
The performance of the proposed doubly variational quantum QCQP was validated for the OPF problem (hereafter referred to as Q-OPF) using the IEEE 57-node power system benchmark~\cite{pglib}. For this benchmark, we have $N=57$ primal variables and $M=422$ dual variables. Therefore, the number of qubits needed to capture $\bv(\btheta)$ and $\blambda(\bphi)$ is 6 and 9, respectively. To simplify the implementation, nodes hosting generators were assumed to have no load, i.e., $p^d_n=q^d_n=0$ for all $n\in\mcN_g$. Reactive power demands were set so that load nodes have $q^d_n=0.33\times p^d_n$ for all $n\in\mcN_l$. We generated 15 different instances of the OPF by scaling original load demands $\{(p^d_n,q^d_n)\}_{n \in \mcN_{\ell}}$ by factors drawn uniformly distributed within $[0.90,1.05]$, independently per node and problem instance. Each of the 15 OPF instances was solved to global optimality using MATPOWER, a power system toolbox in MATLAB~\cite{matpower}.

\emph{Selecting PQC architecture.} To select the architecture of PQC$_p$, we numerically tested different architectures. Each candidate architecture was validated over all 15 problem instances. For each problem instance $k$, we found the OPF solution $\bv_k^*$ using MATPOWER, and then solved the variational problem
\begin{equation*}
\min_{\btheta_k}\left(1-\frac{\braket{\bpsi(\btheta_k)|\bv_k^*}}{\|\bv_k^*\|_2}\right)
\end{equation*}
with $\ket{\bpsi(\btheta_k)}$ generated by the specified PQC$_p$ architecture. Assuming the previous cost is optimized to global optimality, it measures the alignment between the generated quantum state and the actual OPF solution. The cost is inspired by~\cite{BravoPrieto2023variationalquantum} and is lower-bounded by zero. We tested architectures with different gate types and number of layers $L$, as detailed in Appendix~\ref{sec:app:vqc}. We selected the PQC that attained the smallest average cost across all 15 OPF problem instances. A similar procedure was followed to determine the PQC$_d$ architecture. The optimal PQC$_p$ architecture consists of 10 layers, with each layer comprising the sequence $\bR_y(\theta)-\text{CX}-\bR_z(\theta')-\text{CX}$ gates, resulting in $P=2\times 10\times 6=120$ parameters. The optimal PQC$_d$ architecture is formed by 35 layers of $\bR_y(\phi)-\text{CX}$ gates, yielding $Q=1\times 35\times 9=315$ parameters. The average fitting errors obtained were $2\times 10^{-4}$ and $2\times 10^{-5}$ for PQC$_p$ and PQC$_d$, respectively. This demonstrates that the selected PQC pairs are sufficiently expressive. 

\emph{Algorithm details.} PQC parameters $\btheta$ and $\bphi$ were initialized uniformly at random within the range $[0,2\pi]$. Because voltages are normally around 1 per unit, the scaling variable $\alpha$ was initialized to $\sqrt N$; the variable $\beta$ was initialized to $2\left|\mcN_{\ell}\right|$. Step sizes were set according to the exponentially decaying rule as $\mu^t_{\theta}=0.015\times 0.99985^t$, $\mu^t_{\phi}=0.01\times 0.99985^t$, and $\mu^t_{\alpha}=\mu^t_{\beta}=10^{-5}\times 0.999^t$. Both PD and EG iterations were terminated when both conditions $\left\|\btheta^t-\btheta^{t-1}\right\|_2\leq 10^{-6}$ and $\left\|\bphi^t-\bphi^{t-1}\right\|_2\leq 10^{-6}$ were satisfied. Simulation scripts were written in Python and run on Pennylane's exact simulator~\cite{Pennylane}.

\begin{table}[]
\centering

\begin{tabular}{l|c|c|c|c|c}
 \hline \hline
\textbf{Model} & $\bx_g$ [\%] & $\blambda$ [\%] & \textbf{(a)} & \textbf{(b)} & \textbf{(c)} \\ \hline\hline
   \text{QCQP-PD}   & 14.95 & 19.30        &  25.87            &    13.73          &  0.49 
   \\ \hline
   \text{QCQP-EG}   & 13.67 & 19.20        &  24.53            &    12.29          &  0.39
  \\ \hline \hline 
    $\text{QCQP}_{\theta}\text{-PD}$    & 11.82 &  12.87      &  11.73            & 13.03             & 0.23            \\ \hline
  $\text{QCQP}_{\theta}\text{-EG}$    & 7.62 &  12.17      &  11.53            & 11.86             & 0.21 \\\hline\hline
\end{tabular}
\caption{Average relative errors and constraint violation statistics across problem instances: (a) Number of constraint violations per instance; (b) maximum constraint violation [\%]; (c) average constraint violation [\%].}
\label{tab:error}
\end{table}

We first explored the performance of Q-OPF in finding AC OPF solutions. Rather than evaluating errors on OPF solutions in terms of voltage phasors $\bv$, we focused on finding the active power injection and voltage magnitude at all generator nodes. This is because power system operators are primarily interested in generator setpoints. Define the vector of generator setpoints $\bx_g=[\bp_g;\bv_g]^\top$ with $\bp_g$ and $\bv_g$ collecting generator active power injections and voltage magnitudes, respectively. Generator active power injections can be computed from \eqref{eq:qccon1} given $\bv$ and the corresponding load demand $p_n^d$. In terms of dual variables, we focus on Lagrange multipliers $\lambda_m$ associated with power balance and line limit constraints because these are used in electricity markets to compute the so-called \emph{locational marginal prices}.

\begin{figure}[t]
\centering
\includegraphics[width=0.85
\linewidth]{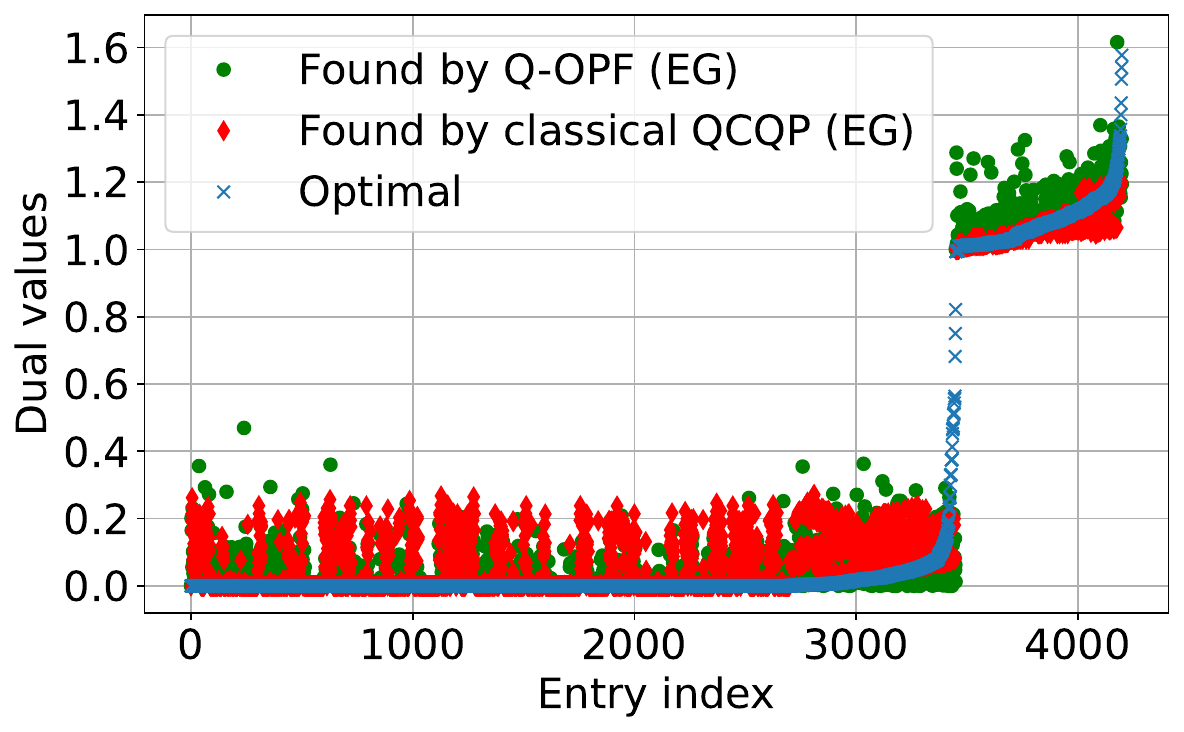}
\caption{Comparing dual entries obtained by the Q-OPF and the classical QCQP, both solved by the EG iteration, to the optimal values found by MATPOWER. Dual vectors of 15 instances were concatenated and sorted in increasing order. Entries smaller than $10^{-6}$ were assigned to 0.}
\label{fig:dual}
\end{figure}

We compared solving \eqref{eq:qc} and \eqref{eq:qcF} in terms of the relative errors $\left\|\bx_g-\bx_g^*\right\|_2/\left\|\bx_g^*\right\|_2$ and $\left\|\blambda-\blambda^*\right\|_2/\left\|\blambda^*\right\|_2$ of the found generator setpoints $\bx$ and dual variables $\blambda$ from their global optimal solutions found using MATPOWER. The entries of the dual vector $\blambda$ correspond to the Lagrangian multipliers associated with 280 power balance and line limit constraints. The non-convex \eqref{eq:qc} was solved classically using the PD iteration in~\eqref{eq:pd1} and the EG iteration in~\eqref{eq:eg} over voltage phasors $\bv$ and dual variables $\blambda$. Voltages were initialized at the flat voltage profile as $\bv^0=\bone$, while the entries of $\blambda$ were initialized independently at random based on the standard normal distribution and scaled by $2\left|\mcN_{\ell}\right|$. The related step sizes were set as $\mu_v=\mu_{\lambda}=10^{-3} \times 0.9999^t$, and PD iterates were terminated when $\left\|\bv^t-\bv^{t-1}\right\|_2\leq 10^{-6}$ and $\left\|\blambda^t-\blambda^{t-1}\right\|_2\leq 10^{-6}$ were met. 

Table~\ref{tab:error} shows the relative errors for generator setpoints and dual variables attained by solving~\eqref{eq:qcF} and~\eqref{eq:qc} via the PD and EG iterations. Among the four models, solving~\eqref{eq:qcF} using the EG iteration achieved the smallest relative errors on $\bx_g$ and $\blambda$. Comparing algorithms when solving either~\eqref{eq:qcF} or~\eqref{eq:qc}, the EG method consistently outperforms the PD method. Similarly, fixing the algorithm to be either EG or PD, solving~\eqref{eq:qcF} using a hybrid quantum-classical computer offers smaller errors compared to solving~\eqref{eq:qc} using the PD method.

Given that the EG iteration outperforms the PD iteration, Fig.~\ref{fig:dual} demonstrates the dual variables obtained by solving~\eqref{eq:qcF} and~\eqref{eq:qc} by the EG method. The Q-OPF approach can approximate dual variables reasonably well. This test corroborates that Q-OPF is capable of finding near-optimal AC OPF solutions.

\emph{Feasibility.} Does the power system safely operate under the generator setpoints computed by Q-OPFs? To answer this question, for each instance, given the input load demands and the setpoints $\bx_g$ obtained by Q-OPF, we found an AC power flow solution via MATPOWER. The obtained voltages were used to evaluate all 222 inequality constraints in~\eqref{eq:qc_detail}. Violations in line current and generator constraints were normalized by their maximum limits. We employed three constraint violation metrics: \emph{a)} the average number of constraint violations surpassing a normalized magnitude of $10^{-6}$ across OPF instances; \emph{b)} the maximum constraint violation incurred across instances; \emph{c)} the violations averaged over all constraints and instances. As shown in Table~\ref{tab:error}, solving~\eqref{eq:qcF} using the EG iteration attains the best constraint violation metrics among the tested models.

Figure~\ref{fig:Lagrangian} shows the relative errors in terms of the Lagrangian function $\left|\mcL-P^*_{\theta}\right|/P^*_{\theta}$ for the last EG iteration. The relative errors of the Q-OPF are consistently below 1.5\% and smaller than those of QCQP in 14 out of 15 instances. This test indicates that under the Q-OPF's generator setpoints, all physical constraints of the network are nearly satisfied.

\begin{figure}[t]
\centering
\includegraphics[width=0.85
\linewidth]{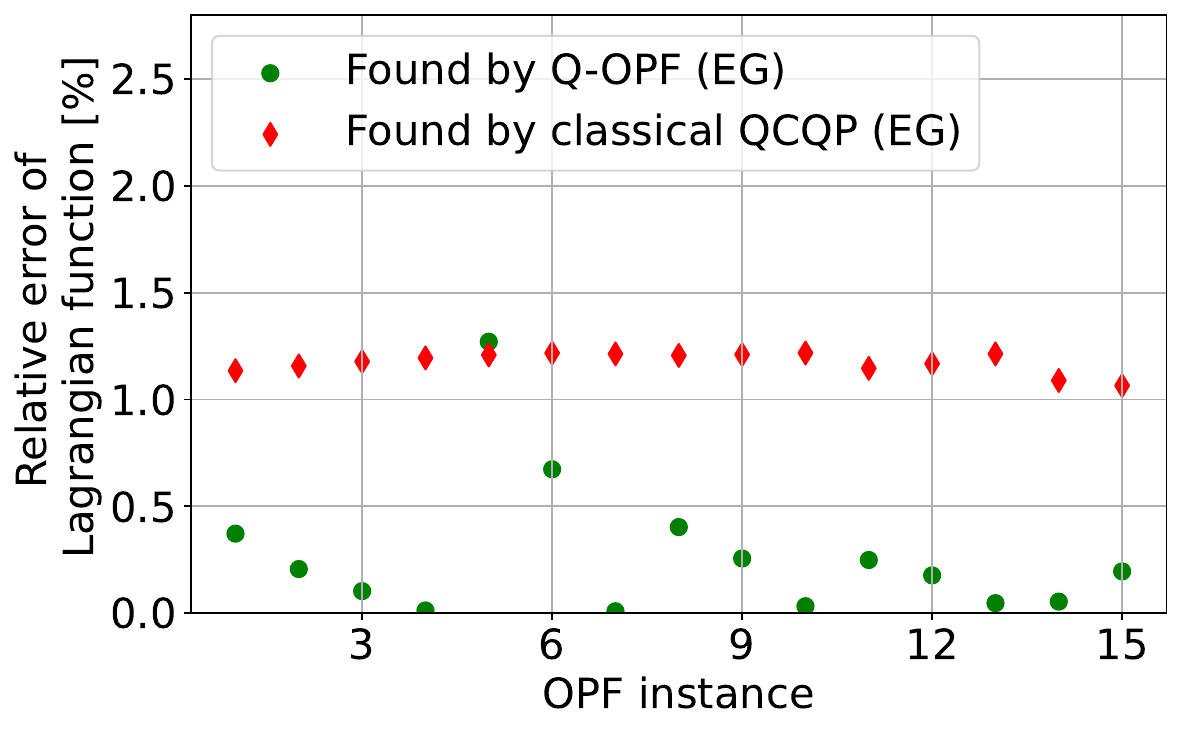}
\caption{Comparing the relative error of the Lagrangian function for Q-OPF and the classical non-convex QCQP solved by the EG iteration across 15 OPF instances. The globally optimal value of the Lagrangian was computed using MATPOWER.}
\label{fig:Lagrangian}
\end{figure}

\section{Conclusions}\label{sec:conclude}

This work proposes a doubly variational quantum approach to solve QCQP/SDPs, where the observables are induced by underlying sparse graphs. Primal/dual variables are modeled by scaling the state and the PMF corresponding to two PQCs. These two PQCs are trained using exclusively the gradient estimates of the Lagrangian function. The training procedure aims to seek an approximate stationary point of the Lagrangian function. By adopting the XBM method and leveraging graph sparsity, the related observables are judiciously permuted into a banded form, enabling efficient evaluation of the Lagrangian function. The proposed algorithmic framework is applied to the OPF problem. Numerical tests on the IEEE 57-node benchmark power system using PennyLane's quantum simulator have demonstrated that: \emph{i)} the EG iteration performs markedly better than the PD iteration; \emph{ii)} solving~\eqref{eq:qcF} on PQCs offers smaller errors in terms of the generator setpoint and dual variables, as well as smaller constraint violations compared to solving~\eqref{eq:qc} on a classical computer using PD; and \emph{iii)} the Q-OPF found generator setpoints and dual variables with small errors, and under the generator setpoints obtained by the Q-OPF, the network constraints are nearly satisfied.

The proposed doubly variational quantum paradigm sets a solid foundation for exploring exciting directions: \emph{d1)} Thus far, the PQC parameters $(\btheta,\bphi)$ have been optimized to solve one problem instance, and parameters $b_m$ have solely appeared on the diagonal of the observable $\bS$. Nevertheless, vector $\bb$ can be encoded as parameters of another unitary $\bT(\bb)$, under which the corresponding primal and dual states are  $\ket{\bpsi(\btheta,\bb)}=\bT(\bb)\bU(\btheta)\ket{\bzero}$ and $\ket{\bxi(\bphi,\bb)}=\bT(\bb)\bV(\bphi)\ket{\bzero}$. Measuring the states $\ket{\bpsi(\btheta,\bb)}$ and $\ket{\bxi(\bphi,\bb)}$ and utilizing the proposed doubly variational algorithm for the Lagrangian in~\eqref{eq:Lagrangian1} evaluated across instances can solve multiple problem instances; \emph{d2)} Prudently construct an alternative objective function for~\eqref{eq:qc} to escape the problem of measurement samples scaling with $\bar{\alpha}$ and $\bar{\beta}$; \emph{d3)} Designing network-informed PQCs and/or testing the Q-OPF using dynamic PQCs and quantum recurrent embedding neural networks will be imperative for dealing with large-scale networks; and \emph{d4)} The Q-OPF can be a building block to investigate other computational tasks, such as multi-period OPF, which entails the time-coupled ramp constraints on generators.

\begin{acknowledgments}
VK and TVL acknowledge support from the US National Science Foundation under grant no.~2412947, and the US Office of Naval Research under grant N000142412614. MMW acknowledges support from the National Science Foundation under grant no.~2329662 and from the Cornell School of Electrical and Computer Engineering.
\end{acknowledgments}

\bibliography{Q_OPF}
\appendix
\section{Appendix}

\subsection{Power system modeling}\label{sec:app:opfmodel}
A transmission line connecting node $n$ to node $m$ is represented by its series conductance $G_{nm}$ and series susceptance $B_{nm}$. Given $G_{nm}$ and $B_{nm}$, power injections at node $n$ are computed through nodal voltages based upon the following quadratic power flow equations:
\[p_n = \sum_{m=1}^{N} v_n^r(v_m^rG_{nm}-v_m^iB_{nm}
)+v_n^i(v_m^iG_{nm}+v_m^rB_{nm}),\]
\[q_n = \sum_{m=1}^{N} v_n^i(v_m^rG_{nm}-v_m^i B_{nm}
)-v_n^r(v_m^iG_{nm}+v_m^r B_{nm}),\]
where $(v_n^r,v_n^i)$ and $(v_m^r,v_m^i)$ are the real and imaginary components of the voltages at nodes $n$ and $m$, respectively.
Define the $N \times N$ node admittance matrix as $\bY\coloneqq \bG+\iota\bB$, where $\bG$ collects $G_{nm}$ and $\bB$ collects $B_{nm}$ of the network lines.  Then the power flow equations can be compactly represented as
\begin{align}
    \label{opf1}
    p_n & = \bv^\dag \bM_{p_n} \bv ,\\
    \label{opf2}
    q_n & = \bv^\dag \bM_{q_n} \bv,
\end{align}
where $\bM_{p_n}$ and $\bM_{q_n}$ are $N \times N$ Hermitian matrices taken the forms as 
\[\bM_{p_n} \coloneqq  \frac{1}{2} (\bY^\dag \be_n \be_n^\top+\be_n \be_n^\top \bY),\]
\[\bM_{q_n} \coloneqq  \frac{1}{2\iota} (\bY^\dag \be_n \be_n^\top-\be_n \be_n^\top \bY).\]
Similarly, the squared voltage magnitude at node $n$ is represented as
\begin{equation}
    (v_n^r)^2+(v_n^i)^2=\bv^\dag \bM_{v_n} \bv
\end{equation}
where $\bM_{v_n}\coloneqq \be_n\be_n^\top$.
Given a line $(n,m) \in \mcE$ with the series admittance $Y_{nm}$, the complex current flowing on from node $n$ to node $m$ is provided as $i_{nm}=Y_{nm}(v_n-v_m)$. This poses $|i_{nm}|=\bv^\dag \bM_{i_{nm}} \bv$
where $ \bM_{i_{nm}}\coloneqq \left |Y_{nm}\right |(\be_n-\be_m)(\be_n-\be_m)^\top$.

The power injection $(p_n,q_n)$ consists of a dispatchable constituent $(p^g_n,q^g_n)$ and an inflexible constituent $(p^d_n,q^d_n)$ such that $p_n=p^g_n-p^d_n$ and $q_n=q^g_n-q^d_n$. The former represents the power dispatch of a generator or a flexible load placed at node $n$, and the latter expresses the inelastic load supplied at node $n$. Let nodes with dispatchable power injections belong to a subset $\mcN_g \subseteq \mcN$. The remaining nodes hosted inelastic loads form the subset $\mcN_{\ell}=\mcN  \backslash \mcN_g$. Node $n=1$ is referred to as the angle reference node, under which it follows $\bv^\dag \be_1 \be_1^\top \bv=1$, where $\be_1$ is the first column of the identity. To simplify the exposition, each node in $\mcN_g$ is assumed to host only one dispatchable unit.
Given the inflexible loads $\{p^d_n,q^d_n\}_{n \in \mcN}$, the OPF's objective is to minimize the dispatch cost of generators and flexible load while respecting the resource and network limits. The OPF is formulated as the ensuing QCQP over nodal voltages and power of generators:
\begin{subequations}\label{eq:qc_detail}
\begin{align} 
    \min~&~\sum_{n \in \mcN_g} c_n p^g_n \\
    \mathrm{over}~&~\bv\in \mathbb{C}^N, \{p^g_n,q^g_n\}_{n \in \mcN_g} \nonumber \\
    \mathrm{s.to}~&~\bv^\dag \bM_{p_n} \bv=p^g_n-p^d_n 
        & \forall n \in \mcN_g \label{eq:qccon1} \\
    ~&~\bv^\dag \bM_{q_n} \bv=q^g_n-q^d_n 
        & \forall n \in \mcN_g \label{eq:qccon2}\\
    ~&~\bv^\dag \bM_{p_n} \bv=-p^d_n 
        & \forall n \in \mcN_{\ell} \label{eq:qccon3} \\
    ~&~\bv^\dag \bM_{q_n} \bv=-q^d_n 
        & \forall n \in \mcN_{\ell} \label{eq:qccon4} \\
    ~&~ \ubar{p}^g_n \leq \bv^\dag \bM_{p_n}\bv+p^d_n \leq \bar{p}^g_n 
        & \forall n \in \mcN_g \label{eq:qccon5}\\
    ~&~ \ubar{q}^g_n \leq \bv^\dag \bM_{q_n}\bv+q^d_n \leq \bar{q}^g_n 
        & \forall n \in \mcN_g \label{eq:qccon6}\\
    ~&~ \ubar{v}_n \leq \bv^\dag \bM_{v_n}\bv \leq \bar{v}_n 
        & \forall n \in \mcN \label{eq:qccon7}\\
    ~&~\bv^\dag \be_{1}\be_{1}^\top \bv=1 \label{eq:qccon8}\\
    ~&~\bv^\dag \bM_{i_{nm}}\bv \leq \bar{i}_{nm} 
        & \forall (n,m) \in \mcE \label{eq:qccon9}
\end{align}
\end{subequations}
Constraints~\eqref{eq:qccon1}--\eqref{eq:qccon4} capture the power flow equations at generator and load nodes. Constraints~\eqref{eq:qccon5}--\eqref{eq:qccon6} confine limits of generators and flexible loads. Constraint~\eqref{eq:qccon7} limits the squared voltage magnitudes within the specified range, and constraint~\eqref{eq:qccon8} represents the squared magnitude of the angle reference node. Constraint~\eqref{eq:qccon9} enforces the line current magnitude within the line ratings.
By substituting $\{p^g_n,q^g_n\}$ in~\eqref{eq:qccon1}--\eqref{eq:qccon2} into the objective of~\eqref{eq:qc_detail}, the OPF model is represented as a QCQP in terms of the nodal voltage vector $\bv$.

\subsection{Proof of Lemma~\ref{le:term2}}\label{sec:app:le1}
The claim follows readily by substituting $\bM$ into the observable as
\begin{align*}
\braket{\bxi,\bpsi|\bM|\bxi,\bpsi}&=(\bxi\otimes \bpsi)^\dag\left(\sum_{m=1}^M \be_m \be_m^\top \otimes \bM_m\right) (\bxi\otimes \bpsi)\\
&=\sum_{m=1}^M(\bxi\otimes \bpsi)^\dag\left( \be_m \be_m^\top \otimes \bM_m\right) (\bxi\otimes \bpsi)\\
&=\sum_{m=1}^M\left( \bxi^\dag\be_m \be_m^\top\bxi\right) \otimes \left(\bpsi^\dag\bM_m\bpsi\right)\\
&=\sum_{m=1}^M|\xi_m(\bphi)|^2F_m(\btheta).\\
\end{align*}

\subsection{Proof of Lemma~\ref{le:lipschitz}}\label{sec:app:le3}

Quantum observables are trigonometric functions of $(\btheta,\bphi)$. Moreover, the Lagrangian function $\mcL(\bz)$ is quadratic in $(\alpha,\beta)$. Therefore, the operator $\bg(\bz)$ is continuously differentiable with respect to $\bz$. Let $\bJ(\bz)\coloneqq \nabla_{\bz}\bg(\bz)$ denote the Jacobian matrix of $\bg(\bz)$. If the spectral norm of $\bJ(\bz)$ is upper bounded by a constant $L$, then $\bg(\bz)$ is $L$-Lipschitz continuous~\cite[Lemma~2.6]{garrigos2023handbook}. Based on matrix norm inequalities, the spectral norm of $\bJ(\bz)$ can be bounded as:
\begin{align}\label{eq:matrix_norm}
\left\|\bJ(\bz)\right\| &\leq \sqrt{\left\|\bJ(\bz)\right\|_1 \left\|\bJ(\bz)\right\|_{\infty}}
\end{align}
where $\left\|\bJ(\bz)\right\|_1$ is the maximum absolute column sum of $\bJ(\bz)$, and $\left\|\bJ(\bz)\right\|_{\infty}$ is the maximum absolute row sum. Because $\mcL(\bz)$ has continuous second-order partial derivatives, Schwarz's theorem predicates that $\frac{\partial^2\mcL}{{\partial z_i\partial z_j}}=\frac{\partial^2\mcL}{{\partial z_j\partial z_i}}$, so that $\left\|\bJ(\bz)\right\|_1=\left\|\bJ(\bz)\right\|_{\infty}$ and \eqref{eq:matrix_norm} simplifies as
\begin{align}\label{eq:lipschitz:bound}
\left\|\bJ(\bz)\right\| &\leq \left\|\bJ(\bz)\right\|_{\infty}=\max_{j\in\{1,\ldots,P+Q+2\}}\left\|\be_j^{\top}\bJ(\bz)\right\|_1 
\end{align}
where $\be_j$ is the $j$-th column of the identity.

Consider first the first $P$ rows of $\bJ(\bz)$ corresponding to the partial Hessian matrix $\nabla_{\btheta,\bz}^2\mcL(\bz)$. More specifically, the absolute row sum for row $j\in\{1,\ldots,P\}$ is
\begin{multline}\label{eq:hessian:theta}
\left\|\be_j^{\top}\bJ(\bz)\right\|_1 = \left(\sum_{p=1}^P \left|\frac{\partial^2\mcL(\bz)}{\partial\theta_j \partial\theta_{p}}\right|\right)
+ \left|\frac{\partial^2\mcL(\bz)}{\partial\theta_j\partial\alpha}\right|
 \\
+ \left(\sum_{q=1}^Q \left|\frac{\partial^2\mcL(\bz)}{\partial\theta_j\partial\phi_q}\right|\right)+ \left|\frac{\partial^2\mcL(\bz)}{\partial\theta_j\partial\beta}\right|.
\end{multline}
Applying the triangle inequality on each summand of the first term in~\eqref{eq:hessian:theta} yields
\[\left|\frac{\partial^2\mcL(\bz)}{\partial\theta_j\partial\theta_{p}} \right| \leq \alpha^2 \left|\frac{\partial^2F_0(\btheta)}{\partial \theta_i\partial \theta_{p}}\right|+\alpha^2\beta^2\sum_{m=1}^M |\xi_m(\bphi)|^2\left|\frac{\partial^2F_m(\btheta)}{\partial \theta_i\partial \theta_{p}}\right| \]
The second-order partial derivatives of quantum expectations can be computed according to the PSR as~\cite{mari2021}:
\begin{align*}
\left|\frac{\partial^2F_m(\btheta)}{\partial\theta_j\partial\theta_{p}}\right|
&=\frac{1}{4}\left|F_m\left(\btheta^+_j+\frac{\pi}{2} \be_{p}\right)-F_m\left(\btheta^+_j-\frac{\pi}{2} \be_{p}\right)\right.\nonumber\\
&\quad\left.-F_m\left(\btheta^-_j+\frac{\pi}{2} \be_{p}\right)+F_m\left(\btheta^-_j-\frac{\pi}{2} \be_{p}\right)\right| \nonumber\\
&\leq \frac{1}{4}\left|F_m\left(\btheta^+_j+\frac{\pi}{2} \be_{p}\right)\right|+\frac{1}{4}\left|F_m\left(\btheta^+_j-\frac{\pi}{2} \be_{p}\right)\right|\nonumber\\
&+\frac{1}{4}\left|F_m\left(\btheta^-_j+\frac{\pi}{2} \be_{p}\right)\right|+\frac{1}{4}\left|F_m\left(\btheta^-_j-\frac{\pi}{2} \be_{p}\right)\right| \nonumber\\
&\leq \left\|\bM_m\right\|
\end{align*}
for all $m$, $p$, and $j$.

Based on the latter bounds, we get that:
\begin{align}\label{eq:theta:theta}
&\sum_{p=1}^P\left|\frac{\partial^2\mcL(\bz)}{\partial\theta_j\partial\theta_{p}} \right| \leq P\alpha^2 \left(\left\|\bM_0\right\|+\beta^2\sum_{m=1}^M|\xi_m(\bphi)|^2 \left\|\bM_m\right\|\right)\nonumber\\
&\quad\quad\leq P\alpha^2 \left(\left\|\bM_0\right\|+\beta^2\braket{\bxi(\bphi)|\bxi(\bphi)}\max_{m}\left\|\bM_m\right\|\right)\nonumber\\
&\quad\quad=P\alpha^2 \left(\left\|\bM_0\right\|+\beta^2\max_{m}\left\|\bM_m\right\|\right).
\end{align}

The second term of~\eqref{eq:hessian:theta} can be upper bounded similarly using the PSR for first-order partial derivatives:
\begin{align}\label{eq:theta:alpha0}
\left|\frac{\partial^2\mcL(\bz)}{\partial\theta_j\partial \alpha}\right| 
&\leq 2\alpha\left|\frac{\partial F_0(\btheta)}{\partial\theta_j}\right|+2\alpha\beta^2 \sum_{m=1}^M|\xi_m(\bphi)|^2\left|\frac{\partial F_m(\btheta)}{\partial \theta_j}\right|\nonumber\\
&\leq 2\alpha\left\|\bM_0\right\|+2\alpha\beta^2\sum_{m=1}^M|\xi_m(\bphi)|^2\left\|\bM_m\right\|\nonumber\\
&\leq 2\alpha\left\|\bM_0\right\|+2\alpha\beta^2\braket{\bxi(\bphi)|\bxi(\bphi)}\max_{m}\left\|\bM_m\right\|\nonumber\\
&\leq 2\alpha\left\|\bM_0\right\|+2\alpha\beta^2\max_{m}\left\|\bM_m\right\|.
\end{align}

The summands in the third term of~\eqref{eq:hessian:theta} can be upper bounded as:
\begin{align*}
&\left|\frac{\partial^2\mcL(\bz)}{\partial\theta_j\partial \phi_q}\right| 
\leq \alpha^2\beta^2\sum_{m=1}^M  \left|\frac{\partial |\xi_m(\bphi)|^2}{\partial \phi_q}\right| \left|\frac{\partial F_m(\btheta)}{\partial \theta_j}\right|\\
&\quad\leq \frac{\alpha^2\beta^2}{2}\sum_{m=1}^M \left||\xi_m(\bphi^+_q)|^2-|\xi_m(\bphi^-_q)|^2\right|\cdot \left\|\bM_m\right\|\\
&\quad\leq \frac{\alpha^2\beta^2}{2}\left(\sum_{m=1}^M  |\xi_m(\bphi^+_q)|^2+\sum_{m=1}^M |\xi_m(\bphi^-_q)|^2\right)\left\|\bM_m\right\|\\
&\quad\leq \alpha^2\beta^2\max_{m}\left\|\bM_m\right\|
\end{align*}
for all $q$. Then, the third term of~\eqref{eq:hessian:theta} is bounded as
\begin{equation}\label{eq:theta:phi}
\sum_{q=1}^Q\left|\frac{\partial^2\mcL(\bz)}{\partial\theta_j\partial \phi_q}\right| \leq Q\alpha^2\beta^2\max_{m}\left\|\bM_m\right\|.
\end{equation}

The fourth term in~\eqref{eq:hessian:theta} can be bounded as
\begin{align}\label{eq:theta:beta}
\left|\frac{\partial^2\mcL(\bz)}{\partial\theta_j\partial \beta}\right|&\leq 2\alpha^2\beta \sum_{m=1}^M|\xi_m(\bphi)|^2\left|\frac{\partial F_m(\btheta)}{\partial\theta_j}\right|\nonumber\\
&\leq 2\alpha^2\beta\max_{m}\left\|\bM_m\right\|.
\end{align}

Combining~\eqref{eq:theta:theta}--\eqref{eq:theta:beta} yields
\begin{align}\label{eq:bound:theta}
&\max_{j\in\{1,\ldots,P\}}\left\|\be_j^\top \bJ(\bz)\right\|_1 \leq(P\alpha^2+2\alpha)\left\|\bM_0\right\|\nonumber\\
&+\left(P\alpha^2\beta^2+Q\alpha^2\beta^2+2\alpha\beta^2+2\alpha^2\beta\right)\max_{m}\left\|\bM_m\right\|.
\end{align}

For the remaining rows of $\bJ(\bz)$ corresponding to the partial Hessian matrices $\nabla_{\alpha,\bz}^2\mcL(\bz)$, $-\nabla_{\bphi,\bz}^2\mcL(\bz)$, and $-\nabla_{\beta,\bz}^2\mcL(\bz)$, the norms $\left\|\be_j^\top\bJ(\bz)\right\|_1$ can be upper bounded using similar arguments. A key point here is that $\left\|\bM\right\|=\max_m\left\|\bM_m\right\|$ because $\bM$ in \eqref{eq:M} is block diagonal. To avoid repetition and tedious algebra, we omit the detailed derivations and present the final bounds.
\begin{align}\label{eq:bound:alpha}
&\|\be_{P+1}^\top\bJ(\bz)\|_1\leq (2P\alpha+2)\left\|\bM_0\right\|\nonumber\\
&+(2P\alpha\beta^2+4Q\alpha \beta^2+2\beta^2+4\alpha\beta)\max_{m}\left\|\bM_m\right\|,
\end{align}
\begin{align}\label{eq:bound:phi}
&\max_{j\in\{P+2:P+Q+1\}}\left\|\be_j^\top\bJ(\bz)\right\|_1 \leq (Q\beta^2+2\beta) \max_m |b_m|\nonumber\\
&\quad\quad+\left(P\alpha^2\beta^2+Q\alpha^2\beta^2+2\alpha\beta^2+2\alpha^2\beta\right)\max_{m}\left\|\bM_m\right\|,
\end{align}
\begin{align}\label{eq:bound:beta}
&\|\be_{P+Q+2}^\top\bJ(\bz)\|_1 \leq (2Q\beta+2)\max_m|b_m|\nonumber\\
&\quad+(P\alpha^2\beta+2Q\alpha^2\beta+4\alpha\beta+2\alpha^2)\max_{m}\left\|\bM_m\right\|.
\end{align}

According to~\eqref {eq:lipschitz:bound}, the Lipschitz constant $L$ of $\bg(\bz)$ is the maximum of the right-hand sides of~\eqref{eq:bound:theta}--\eqref{eq:bound:beta}. If $\alpha \geq 4$, the right-hand side (RHS) of \eqref{eq:bound:theta} is larger than the RHS of \eqref{eq:bound:alpha}. The condition $\alpha \geq 4$ holds trivially if the problem size is $N>32$. Moreover, if $\beta \geq 2$, the RHS of \eqref{eq:bound:phi} is larger than that of \eqref{eq:bound:beta}. The condition $\beta \geq 2$ is expected to occur for problems with many constraints. Therefore, we are left with the RHS of \eqref{eq:bound:theta} and \eqref{eq:bound:phi}, whose maximum can be compactly expressed as
\begin{align*}\label{eq:alpha&phi}                  
&\left(P\alpha^2\beta^2+Q\alpha^2\beta^2+2\alpha\beta^2+2\alpha^2\beta\right)\max_{m}\left\|\bM_m\right\|\nonumber\\
&+\max\{(P\alpha^2+2\alpha)\left\|\bM_0\right\|,(Q\beta^2+2\beta) \max_m |b_m|\}.
\end{align*}
Then, the claim of Lemma~\ref{le:lipschitz} follows readily. 
\begin{table*}[t]
    \centering
    \begin{tabular}{|c|l|c|c|c|c|c|c|}
        \hline
        No. & architecture of 1 layer & $L$ (primal) & $P$ & error (primal) & $L$ (dual) & $Q$ & error (dual) \\
        \hline
        1 & $\bR_x-$CX & 20 & 120 & $9\times 10^{-1}$ & 35 & 315 &  $8\times 10^{-5}$\\
        2 & $\bR_y-$CX & 20 & 120 & $5\times 10^{-3}$ & 35 & 315 & $2\times 10^{-5}$  \\
        3 & $\bR_z-$CX & 20 & 120 & 1 & 35 & 315 &  $2\times 10^{-3}$\\
        4 & $\bR_x-$CX$-\bR_y-$CX & 10 & 120 & $3\times 10^{-4}$ & 18 & 324 &  $7\times 10^{-5}$\\
        5 & $\bR_x-$CX$-\bR_z-$CX & 10 & 120 & $3\times 10^{-4}$ & 18 & 324 &  $7\times 10^{-5}$\\
        6 & $\bR_y-$CX$-\bR_z-$CX & 10 & 120 & $2\times 10^{-4}$ & 18 & 324 & $7\times 10^{-5}$ \\
        7 & $\bR_x-$CX$-\bR_y-$CX$-\bR_z-$CX & 6 & 108 & $3\times 10^{-3}$ & 12 & 324 & $7\times 10^{-5}$ \\
        8 & $\bR_x-\bR_y-\bR_z-$CX & 7 & 126 & $10^{-3}$ & 12 & 324 & $4\times 10^{-6}$ \\
        \hline
    \end{tabular}
    \caption{Tested PQC architectures.}
    \label{tab:architectures}
\end{table*}
\subsection{Proof of Lemma~\ref{le:var}}\label{sec:app:th}
We first upper bound the variance of the sample-average estimators of the three terms in~\eqref{eq:Lagrangian1}. Note that the three terms in~\eqref{eq:Lagrangian1} can be decomposed into quantum expectations across $c$. This is obvious for $F_0(\btheta)$ and $G(\bphi)$. Regarding $F(\btheta,\bphi)$, we obtain from \eqref{eq:term2:measure}:
\begin{align*}
&F(\btheta,\bphi)=\sum_{c=1}^{2C-1}\sum_{m=1}^M |\xi_m(\bphi)|^2\braket{\bpsi_c(\btheta)|\bLambda_m^c|\bpsi_c(\btheta)}\\
&=\sum_{c=1}^{2C-1} \braket{\bxi(\bphi), \bpsi_c(\btheta)|\left(\sum_{m=1}^M \be_m \be_m^\top \otimes \bLambda^c_m\right)|\bxi(\bphi),\bpsi_c(\btheta)}.
\end{align*}

Because the three expectations in~\eqref{eq:Lagrangian1} are defined over different states and observables, we consider a general expectation $E\coloneqq \braket{\bzeta|\bN|\bzeta}$, where $\ket{\bzeta}$ and $\bN$ are dimensionally compatible and $\bN$ is a Hermitian matrix with the same color decomposition as matrices $\bM_m$'s. The expectation $E$ can be color-decomposed as 
\begin{equation}\label{eq:E}
E=\sum_{c=1}^{2C-1}\braket{\bzeta|\bN^c|\bzeta}=\sum_{c=1}^{2C-1}\braket{\bzeta_c|\bLambda^c|\bzeta_c},
\end{equation}
where $\bN^c=\bU_c \bLambda^c \bU_c^{\dag}$ is the eigendecomposition of $\bN^c$ and $\ket{\bzeta_c}=\bU_c^{\dag} \ket{\bzeta}$.

Upon measuring $\ket{\bzeta_c}$ in the computational basis using $S$ measurement samples, the binary outcome $\ket{i}$ is observed with probability $|\zeta^i_c|^2\coloneqq|\braket{i|\bzeta_c}|^2$. For each sample $s$, define a random variable $\htE_s^c$, taking the value $\braket{i|\bLambda^c|i}$ when outcome $\ket{i}$ is observed while sampling $\ket{\bzeta_c}$. Subsequently, the sample-average estimator of $E$ can be expressed as
\begin{equation}\label{eq:Ehat}
    \htE=\frac{1}{S}\sum_{c=1}^{2C-1}\sum_{s=1}^S \htE_s^c.
\end{equation}
The variance of $\htE$ is bounded in the next lemma. 

\begin{lemma}\label{le:mean&var}
The variance of the estimator in~\eqref{eq:Ehat} can be computed as:
\begin{align}\label{eq:F0_var}
\Var(\htE)&=\frac{1}{S}\sum_{c=1}^{2C-1}(\braket{\bzeta_c|(\bLambda^c)^2|\bzeta_c}-\braket{\bzeta_c|\bLambda^c|\bzeta_c}^2),
\end{align}
and can be upper bounded as
\begin{equation} \label{eq:varE:bound}
    \Var(\htE) \leq \frac{1}{S}\sum_{c=1}^{2C-1}\left\|\bN^c\right\|^2.
\end{equation}
\end{lemma}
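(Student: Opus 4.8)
The plan is to treat each single-shot measurement outcome as a random variable, establish unbiasedness color-by-color, exploit independence across colors and shots to combine variances additively, and finally bound each per-color variance by a spectral norm.

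First I would fix a color $c$ and analyze the single-sample random variable $\htE_1^c$. Since $\bLambda^c$ is diagonal and the outcome $\ket{i}$ occurs with probability $|\zeta_c^i|^2=|\braket{i|\bzeta_c}|^2$ when sampling $\ket{\bzeta_c}$, the first moment is
\[\mathbb{E}[\htE_1^c]=\sum_i |\zeta_c^i|^2\braket{i|\bLambda^c|i}=\braket{\bzeta_c|\bLambda^c|\bzeta_c},\]
which, summed over colors, reproduces $E$ in~\eqref{eq:E}, confirming that $\htE$ is unbiased. Crucially, because $\bLambda^c$ is diagonal we have $\braket{i|\bLambda^c|i}^2=\braket{i|(\bLambda^c)^2|i}$, so the second moment collapses to
\[\mathbb{E}[(\htE_1^c)^2]=\sum_i |\zeta_c^i|^2\braket{i|(\bLambda^c)^2|i}=\braket{\bzeta_c|(\bLambda^c)^2|\bzeta_c}.\]
Subtracting the square of the mean gives the single-shot variance $\Var(\htE_1^c)=\braket{\bzeta_c|(\bLambda^c)^2|\bzeta_c}-\braket{\bzeta_c|\bLambda^c|\bzeta_c}^2$.

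Next I would assemble the variance of the full estimator. The samples indexed by $(c,s)$ in~\eqref{eq:Ehat} are mutually independent: for fixed $c$ the $S$ shots are i.i.d., and different colors correspond to measurements on separately prepared rotated states. Hence
\[\Var(\htE)=\frac{1}{S^2}\sum_{c=1}^{2C-1}\sum_{s=1}^S\Var(\htE_s^c)=\frac{1}{S}\sum_{c=1}^{2C-1}\Var(\htE_1^c),\]
which, after substituting the single-shot variance computed above, yields~\eqref{eq:F0_var}. For the bound in~\eqref{eq:varE:bound} I would discard the nonnegative subtracted term $\braket{\bzeta_c|\bLambda^c|\bzeta_c}^2\geq 0$ and control the remaining quadratic form. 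Since $\ket{\bzeta_c}=\bU_c^\dag\ket{\bzeta}$ with $\bU_c$ unitary and $\ket{\bzeta}$ a unit-norm quantum state, we have $\braket{\bzeta_c|\bzeta_c}=1$, so $\braket{\bzeta_c|(\bLambda^c)^2|\bzeta_c}\leq\|(\bLambda^c)^2\|=\|\bLambda^c\|^2$; unitary invariance of the spectral norm together with $\bN^c=\bU_c\bLambda^c\bU_c^\dag$ gives $\|\bLambda^c\|=\|\bN^c\|$, which completes the bound.

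The computation is essentially routine; the only point requiring genuine care is the bookkeeping of independence, namely distinguishing the i.i.d.\ averaging \emph{within} a color (which produces the $1/S$ factor from $1/S^2$ times $S$ terms) from the independent-but-not-identically-distributed contributions \emph{across} the $2C-1$ colors, together with the observation that the diagonal structure of $\bLambda^c$ is what collapses $\braket{i|\bLambda^c|i}^2$ into a bona fide second-moment expectation.
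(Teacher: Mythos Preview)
Your proposal is correct and follows essentially the same approach as the paper: compute the single-shot variance $\Var(\htE_s^c)=\braket{\bzeta_c|(\bLambda^c)^2|\bzeta_c}-\braket{\bzeta_c|\bLambda^c|\bzeta_c}^2$ via first and second moments, invoke independence across $(c,s)$ to obtain the $1/S$ factor, then drop the squared-mean term and bound the remaining quadratic form by $\|\bN^c\|^2$. Your write-up is in fact slightly more explicit than the paper's in justifying the last step via $\|\bLambda^c\|=\|\bN^c\|$ through unitary invariance and $\braket{\bzeta_c|\bzeta_c}=1$.
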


\begin{proof}
The variance of each $\htE_s^c$ can be computed and upper bounded as follows:
\begin{align*}
&\Var(\htE_s^c)
=\mathbb{E}[(\htE_s^c)^2]-(\mathbb{E}[\htE_s^c])^2\\
&=\sum_{i=0}^{N-1} |\zeta^i_c|^2 \braket{i|\bLambda^c|i}^2-\left(\sum_{i=0}^{N-1}|\zeta^i_c|^2 \braket{i|\bLambda^c|i}\right)^2\\
&=\braket{\bzeta_c|(\bLambda^c)^2|\bzeta_c}-\braket{\bzeta_c|\bLambda^c|\bzeta_c}^2\\
&\leq \braket{\bzeta_c|(\bLambda^c)^2|\bzeta_c}\\
&\leq \|\bN^c\|^2.
\end{align*}
Because the random variables $\htE_s^c$ are independent across $c$ and $s$, we get that
\begin{align*}
\Var(\htE)=\frac{1}{S^2}\sum_{c=1}^{2C-1}\sum_{s=1}^S \Var(\htE_s^c)=\frac{1}{S}\sum_{c=1}^{2C-1}\Var(\htE_s^c)
\end{align*}
for any $s$. Substituting $\Var(\htE_s^c)$ above yields the expression in~\eqref{eq:F0_var}. The final bound in~\eqref{eq:varE:bound} follows readily.
\end{proof}

We are now ready to upper bound the variance of the estimate $\hbg(\bz)$ per Theorem~\ref{th:minimax}.

\begin{proof}[Proof of Lemma~\ref{le:var}]
The RHS of~\eqref{eq:sigma} can be expanded as
\begin{align}
&\mathbb{E}\left[\left\|\hbg(\bz)-\bg(\bz)\right\|_2^2\right]=\nonumber \\
&\underbrace{\mathbb{E}[|\hat{\nabla}_{\alpha}\mcL(\bz)-\nabla_{\alpha}\mcL(\bz)|^2]}_{\coloneqq \Var(\hat{\nabla}_{\alpha}\mcL(\bz))}+\mathbb{E}[\|\hat{\nabla}_{\btheta}\mcL(\bz)-\nabla_{\btheta}\mcL(\bz)\|_2^2]\nonumber \\
&+\mathbb{E}[\|\hat{\nabla}_{\bphi}\mcL(\bz)-\nabla_{\bphi}\mcL(\bz)\|_2^2]+ \underbrace{\mathbb{E}[|\hat{\nabla}_{\beta}\mcL(\bz)-\nabla_{\beta}\mcL(\bz)|^2]}_{\coloneqq \Var(\hat{\nabla}_{\beta}\mcL(\bz))}.\label{eq:grad:z}
\end{align}
Among four terms in~\eqref{eq:grad:z}, the terms $\Var(\hat{\nabla}_{\alpha}\mcL(\bz))$ and $\Var(\hat{\nabla}_{\beta}\mcL(\bz))$ are simpler to bound. Let $\htF_0(\btheta)$ and $\htF(\btheta,\bphi)$ denote the sample-based estimates of $F_0(\btheta)$ and $F(\btheta,\bphi)$, respectively. Then, we can express the first term in~\eqref{eq:grad:z} as:
\begin{align}\label{eq:grad:alpha1}
&\Var(\hat{\nabla}_{\alpha}\mcL(\bz))=\Var\left(2\alpha\htF_0(\btheta)+2\alpha\beta^2\htF(\btheta,\bphi)\right)\nonumber\\
&=4\alpha^2\Var(\htF_0(\btheta))+4\alpha^2\beta^4\Var(\htF(\btheta,\bphi)).
\end{align}
It follows from~\eqref{eq:varE:bound} that
\begin{equation} \label{eq:var1:bound}   
\Var(\htF_0(\btheta))\leq \frac{1}{S}\sum_{c=1}^{2C-1}\|\bM^c_0\|^2.
\end{equation}

Regarding $\Var(\htF(\btheta,\bphi))$, because $\bM$ is a block-diagonal matrix with blocks $\bM_m$, it follows that
\begin{align}\label{eq:var2:bound}
    &\Var(\htF(\btheta,\bphi))\leq \frac{1}{S}\sum_{c=1}^{2C-1}\max_{m}\|\bM^c_m\|^2.
\end{align}

Substituting \eqref{eq:var1:bound}--\eqref{eq:var2:bound} into \eqref{eq:grad:alpha1} provides
\begin{align}
    \Var(\hat{\nabla}_{\alpha}\mcL(\bz))&\leq\frac{4\alpha^2}{S}\sum_{c=1}^{2C-1}\|\bM^c_0\|^2\nonumber\\
    &+\frac{4\alpha^2\beta^4}{S}\sum_{c=1}^{2C-1}\max_{m}\|\bM^c_m\|^2\label{eq:grad:alpha}.
\end{align}

Let $\hat{\partial}_{\theta_p}F_0(\btheta)$ and $\hat{\partial}_{\theta_p}F(\btheta,\bphi)$ denote the sample-based estimates of $\partial_{\theta_p}F_0(\btheta)$ and ${\partial}_{\theta_p}F(\btheta,\bphi)$, respectively. The second expectation in~\eqref{eq:grad:z} entails computing 
\begin{align}
   &\mathbb{E}[\|\hat{\nabla}_{\btheta}\mcL(\bz)-\nabla_{\btheta}\mcL(\bz)\|_2^2]=\sum_{p=1}^P\Var(\hat{\partial}_{\theta_p}\mcL(\bz))\nonumber \\
   &=\sum_{p=1}^P\Var\left(\alpha^2\hat{\partial}_{\theta_p}F_0(\btheta)+\alpha^2\beta^2 \hat{\partial}_{\theta_p}F(\btheta,\bphi)\right)\nonumber \\
    &=\alpha^4\sum_{p=1}^P \left(\Var(\hat{\partial}_{\theta_p}F_0(\btheta))+\beta^4 \Var(\hat{\partial}_{\theta_p} F(\btheta,\bphi))\right)\nonumber \\
    &=\frac{\alpha^4}{4}\sum_{p=1}^P \left(\Var(\htF_0(\btheta^+_p))+\Var (\htF_0(\btheta^-_p))\right)\nonumber\\
    &+\frac{\alpha^4\beta^4}{4}\sum_{p=1}^P \left(\Var(\htF(\btheta_p^+,\bphi))+\Var(\htF(\btheta_p^-,\bphi)\right)\nonumber\\
    &\leq \frac{P\alpha^4}{2S}\sum_{c=1}^{2C-1}\|\bM^c_0\|^2+\frac{P\alpha^4\beta^4}{2S}\sum_{c=1}^{2C-1}\max_{m}\|\bM^c_m\|^2.
\label{eq:grad:theta}
\end{align}

Analogously, the third and fourth expectations in~\eqref{eq:grad:z} can be upper bounded as
\begin{align}\label{eq:grad:phi}
    &\mathbb{E}[\|\hat{\nabla}_{\bphi}\mcL(\bz)-\nabla_{\bphi}\mcL(\bz)\|_2^2]\leq\nonumber\\
    &\frac{Q\beta^4}{2S}\max_{m}b_m^2+\frac{Q\alpha^4\beta^4}{2S}\sum_{c=1}^{2C-1}\max_{m}\|\bM^c_m\|^2,
\end{align}

\begin{align}   
&\Var(\hat{\nabla}_{\beta}\mcL(\bz))\leq \nonumber\\
&\frac{4\beta^2}{S}\max_{m}b_m^2+\frac{4\alpha^4\beta^2}{S}\sum_{c=1}^{2C-1}\max_{m}\|\bM^c_m\|^2.\label{eq:grad:beta}
\end{align}

Summing \eqref{eq:grad:alpha}--\eqref{eq:grad:beta}, the RHS of~\eqref{eq:sigma} is bounded as:
\begin{align*} 
&\mathbb{E}\left[\left\|\hbg(\bz)-\bg(\bz)\right\|_2^2\right]\leq \nonumber\\
&\frac{Q\bar{\beta}^4+8\bar{\beta}^2}{2S}\max_{m}b_m^2+\frac{8\bar{\alpha}^2+P\bar{\alpha}^4}{2S}\sum_{c=1}^{2C-1}\|\bM^c_0\|^2+\nonumber\\
&\frac{8\bar{\alpha}^2 \bar{\beta}^4+8\bar{\alpha}^4\bar{\beta}^2+(P+Q)\bar{\alpha}^4 \bar{\beta}^4}{2S}\sum_{c=1}^{2C-1}\max_{m}\|\bM^c_m\|^2.
\end{align*}
\end{proof}

\subsection{PQC architectures}\label{sec:app:vqc}
The tested PQC architectures are reported in Table~\ref{tab:architectures}.

\vfill 
\end{document}